\DeclareMathAlphabet{\pazocal}{OMS}{zplm}{m}{n}
\newtheorem{theorem}{Theorem}
\newtheorem{principle}{Principle}
\newtheorem{proposition}{Proposition}
\newtheorem{lemma}[theorem]{Lemma}
\newtheorem{definition}{Definition}
\newtheorem{obersvation}{Observation}
\newcommand{\perturb}{\ensuremath{\pi}\xspace}
\newcommand{\aggregate}{\ensuremath{\Gamma}\xspace}
\newcommand{\mypara}[1]{\noindent\textbf{#1 }}
\newcommand{\tuple}[1]{\ensuremath{\langle #1 \rangle}\xspace}
\newcommand{\Domain}{\ensuremath{D}\xspace}
\newcommand{\says}[2]{{\color{blue}{#1 says: }{#2}}\xspace}
\newcommand{\rappor}{\texttt{SPM}\xspace}
\newcommand{\spm}{\texttt{SPM}\xspace}
\renewcommand{\Pr}[1]{\ensuremath{\mathsf{Pr}\left[#1\right]}\xspace}
\newcommand{\Pall}[2]{\ensuremath{\mathsf{Piden}_{#2}\left[#1\right]}\xspace}
\newcommand{\rap}{RAPPOR\xspace}
\newcommand{\blh}{BLH\xspace}
\newcommand{\ind}{\mathsf{1}}
\newcommand{\hash}{\texttt{MCM}\xspace}
\newcommand{\product}{\texttt{HieOSM}\xspace}
\newcommand{\itergroup}{\texttt{PEM}\xspace}
\newcommand{\covt}{\texttt{CoverTree}\xspace}
\newcommand{\grr}{GRR\xspace}
\newcommand{\olh}{OLH\xspace}
\newcommand{\oue}{OUE\xspace}
\newcommand{\DE}{DE\xspace}
\newcommand{\PE}{\ensuremath{\mathsf{PE}}\xspace}
\newcommand{\fone}{\ensuremath{\mathsf{F1}}\xspace}
\newcommand{\var}{\ensuremath{\mathsf{Var}}\xspace}
\newcommand{\ncr}{\ensuremath{\mathsf{NCR}}\xspace}
\begin{document}
	\vspace{-1cm}
\author{\IEEEauthorblockN{Tianhao Wang}
\IEEEauthorblockA{Purdue University\\
tianhaowang@purdue.edu}
\and
\IEEEauthorblockN{Ninghui Li}
\IEEEauthorblockA{Purdue University\\
ninghui@cs.purdue.edu}
\and
\IEEEauthorblockN{Somesh Jha}
\IEEEauthorblockA{University of Wisconsin-Madison\\
jha@cs.wisc.edu}}
\vspace{-1cm}
\sloppypar
\newcommand{\fullTitle}{Locally Differentially Private Heavy Hitter Identification}

\title{\fullTitle}

\IEEEoverridecommandlockouts
\makeatletter\def\@IEEEpubidpullup{9\baselineskip}\makeatother
\IEEEpubid{\parbox{\columnwidth}{
	}
	\hspace{\columnsep}\makebox[\columnwidth]{}}

\maketitle
\vspace{-1cm}
\begin{abstract}
The notion of Local Differential Privacy (LDP) enables users to answer sensitive questions while preserving their privacy. The basic LDP frequent oracle protocol enables the aggregator to estimate the frequency of any value. But when the domain of input values is large, finding the most frequent values, also known as the heavy hitters, by estimating the frequencies of all possible values, is computationally infeasible.  
In this paper, we propose an LDP protocol for identifying heavy hitters.  In our proposed protocol, which we call Prefix Extending Method (\itergroup), users are divided into groups, with each group reporting a prefix of her value.  We analyze how to choose optimal parameters for the protocol and identify two design principles for designing LDP protocols with high utility.  Experiments on both synthetic and real-world datasets demonstrate the advantage of our proposed protocol.


\end{abstract}




\section{Introduction}
\label{sec:intro}

In recent years, differential privacy~\cite{Dwo06,DMNS06} has been increasingly
accepted as the \textit{de facto} standard for data privacy in the
research community
\cite{dpbook,Li2016book, abadi2016deep, papernot2016semi, johnson2017practical}.
Recently, techniques for satisfying differential
privacy (DP) in the local setting, which we call {LDP}, have been
deployed.  Such techniques enable gathering of statistics while
preserving privacy of every user, without relying on trust in a single
data curator.  For example, researchers from Google
developed \rap~\cite{rappor,rappor2}, which is included as part of
Chrome.  It enables Google to collect users' answers to questions such as
the default homepage of the browser, the default search engine, and so on,
to understand the unwanted or malicious hijacking of user settings.
Apple~\cite{Apple} also uses similar methods to help with predictions of spelling and other things, but the details of
the algorithm are not public yet.  Samsung proposed a similar
system~\cite{samsung} which enables collection of not only categorical
answers (e.g., screen resolution) but also numerical answers (e.g.,
time of usage, battery volume), although it is not clear whether this
has been deployed by Samsung.

In the LDP setting, each user possesses an input value $v \in \Domain$, and the aggregator wants to learn
the distribution of the input values among all users.  Existing research~\cite{rappor,Bassily2015local,ldpprimitive}
has developed frequency oracle protocols, where the aggregator can estimate the frequency of any chosen value $v \in \Domain$.
When the size of $\Domain$ is small, such frequency oracle protocols can be used to efficiently reconstruct a noisy
approximation of the input distribution.  When the size of $\Domain$ is so large that issuing an oracle query for
each value in it is computationally infeasible, one needs an additional protocol that first identifies a set of
candidate frequent values.  Two protocols for doing this exist~\cite{rappor2,Bassily2015local}.




In this paper, we propose the Prefix Extending Method ($\itergroup$), which is conceptually very simple, and yet is able to provide much better accuracy
than existing protocols, and the advantage is more pronounced as the size of $\Domain$ gets larger.
The basic idea of $\itergroup$ is to gradually identifying longer and longer frequent prefixes. 
For example, if we view \Domain as consisting of length-$m$ binary strings, we divide the users into $g$ groups, where users in each group report 
a prefix of a certain length.  Users in the $j+1$'th group report prefixes of length $\eta$ longer than the $j$'th group, and the $g$'th group
report the whole string.  
Thus the population is divided into $g$ groups of roughly the same size. 
The aggregator uses reports from the first group to finds $C_1$, the set of frequent prefixes, and then uses reports
from the second group to find $C_2$, considering candidates that have prefixes in $C_1$.  The aggregator iterates this process until finding the set of frequent values.

An important parameter in this process is the segment length $\eta$.  Larger $\eta$ would mean higher computational cost.  In terms
of utility, larger $\eta$ means fewer groups and more users in each group, which improves utility.  However, larger $\eta$ means more
candidates to consider in each step, which leads to lower accuracy.  We conduct an utility analysis to study the interactions
of these two effects.  The utility analysis enables us to draw to a conclusion that the first effect dominates the second, and thus
larger $\eta$ results in better utility.  Thus the choice of $\eta$ depends on limitation on the computational resources.
Because of the complexity of the problem, we have to make several simplifying approximations in the analysis.  To validate the analysis,
we accompany each step of the analysis with empirical experiment to show that conclusions drawn from the analysis match empirical ones.

With the analysis, we are able to identify two design principles.  The first is, when asking multiple questions, it
is better to partition the users into groups, and having each group answer one question, as opposed to having each user answer all the
questions splitting the privacy budget.  The second is, one should reduce the number of groups as much as possible when designing LDP protocols, 
as larger group size is a key in achieving accuracy.  Some existing protocols violate these principles, and can be improved by following them.
We expect these principles to guide the design of LDP protocols for other problems.

Finally, we demonstrate the effectiveness of \itergroup by conducting experiments with both synthetic and real-world datasets.
Result shows that \itergroup greatly outperforms existing solutions.

To summarize, we make the following contributions:
\begin{itemize}
	\item We provide new solutions for the privacy-preserving heavy hitter problem.
	The protocol is then analyzed and optimized.
	\item We identify two principles that can guide the design of protocols for other LDP problems.
	\item We demonstrate the effectiveness of our solution using real-world and synthetic datasets.
\end{itemize}

\mypara{Roadmap.}
In Section~\ref{sec:back}, we present LDP and describe existing mechanisms. We then go over the problem definition and existing solutions in Section~\ref{sec:prelim}.
Section~\ref{sec:methods} presents our proposed method and analysis.
The analysis is validated in Section~\ref{sec:ver}.
Experiment results are given in~\ref{sec:eval}.
Finally we discuss related work in Section~\ref{sec:related} and
conclude in Section~\ref{sec:conc}.

\section{Background}
\label{sec:back}

We consider a setting where there are many \emph{users} and one \emph{aggregator}.  Each user possesses an input value $v \in \Domain$, and the aggregator wants to learn (and use) the distribution of the input values among all users, in a way that protects the privacy of individual users.

In the standard (or centralized) setting, each user sends $v$ to the aggregator, which obtains a histogram for the distribution, and can add noises to the histogram to satisfy differential privacy, so that each individual user's input has a limited impact on the output.  In this setting, the aggregator sees the raw input from all users and is trusted to handle these private data correctly.

\subsection{Differential Privacy in Local Setting}

In the local (or distributed) setting, we want to remove the need to trust the aggregator.  To achieve this, each user perturbs the input value $v$ using an algorithm $\perturb$ and sends $\perturb(v)$ to the aggregator.  The formal privacy requirement is that the algorithm $\perturb(\cdot)$ satisfies local differential privacy, defined as follows:

\begin{definition}[Local Differential Privacy] \label{def:dlp}
	An algorithm $\perturb$ satisfies $\epsilon$-local differential privacy ($\epsilon$-LDP), where $\epsilon \geq 0$,
	if and only if for any input $v_1,v_2 \in \Domain$, we have
	\begin{equation*}
	\forall{T\subseteq\! \mathit{Range}(\perturb)}:\; \Pr{\perturb(v_1)\in T} \leq e^{\epsilon}\, \Pr{\perturb(v_2)\in T},
	\end{equation*}
	where $\mathit{Range}(\perturb)$ denotes the set of all possible outputs of the algorithm $\perturb$.
\end{definition}


For an algorithm $\perturb(\cdot)$ to satisfy $\epsilon$-LDP, it must be randomized.   Compared to the centralized setting, the local version of DP offers a stronger level of protection, because each user only reports the perturbed data. Each user's privacy is still protected even if the aggregator is malicious.

\subsection{Frequency Oracles}
\label{subsec:estimators}



A protocol that satisfies LDP is specified by two algorithms: $\perturb$, which is used by each user to perturb her input value, and $\aggregate$, which takes as input the reports from all users, and outputs the desired information.  A basic protocol under LDP is to estimate the frequency of any given value $v\in \Domain$.  In which a protocol, $\aggregate$ outputs an oracle that can be queried for the frequency of each value.  We thus call such a protocol a frequency oracle protocol.

We assume that there are $n$ users, and user $j$'s value is $v^j \in \Domain$, and the domain size is $|\Domain|=d$.

\subsubsection{Generalized Randomized Response (GRR)}

One frequency oracle protocol generalizes the \emph{randomized response} technique~\cite{Warner65}.
In this protocol, $\perturb_{GRR}(v)$ outputs the value $v$ with probability $p=\frac{e^\epsilon}{e^\epsilon+d-1}$, and each value $v'\ne v$ with probability $\frac{1-p}{d-1}=\frac{1}{e^\epsilon+d-1}=\frac{p}{e^\epsilon}$.  In the special case where the value is one bit, i.e., when $d=2$, $\perturb_{GRR}(v)$ keeps the bit unchanged with probability $\frac{e^\epsilon}{e^\epsilon+1}$ and flips it with probability $\frac{1}{e^\epsilon+1}$.


The frequency oracle outputted by $\aggregate_{GRR}$ in this protocol works as follows.  To estimate the frequency of $v$, it counts how many times $v$ is reported and obtains $I_v$, and then outputs $\frac{I_v-nq}{p-q}$.  That is, the frequency estimate is a linear transformation of the noisy count $I_v$, in order to account for the effect of randomized response.  In~\cite{ldpprimitive}, it is shown that this is an unbiased estimation of the true count, and the variance for this estimation is
\begin{equation}\label{var_grr}
\frac{d-2+e^\epsilon}{(e^\epsilon-1)^2}\cdot n.
\end{equation}


The accuracy of this protocol deteriorates fast when the domain size $d$ increases.  The larger $d$ is, the lower the probability that a value is preserved.  This is reflected in the fact that the variance of is linear in $d$. For example, when $\epsilon=\ln 49$, with $d=2^{16}$, we have $p=\frac{49}{65584} \approx 0.00075$,
and variance $\frac{65583}{2304}\approx 28.5n$

More sophisticated frequency estimators have been studied before~\cite{rappor,Bassily2015local,ldpprimitive}.  In~\cite{ldpprimitive}, several such protocols are analyzed, optimized, and compared against each other, and it was found that when $d$ is large, the Optimized Local Hashing (\olh) protocol provides the best accuracy while maintaining a low communication cost.  In this paper, we use the \olh protocol as a primitive and describe it below.

\subsubsection{Optimized Local Hashing (\olh)~\cite{ldpprimitive}}
\label{subsubsec:olh}

The Optimized Local Hashing (\olh) protocol deals with a large domain size $d$ by first using a hash function to map an input value into a smaller domain of size $d'$, typically $d'<< d$, and then applies randomized response to the value in the smaller domain.  In this protocol, both the hashing step and the randomization step result in information loss. The choice of the parameter $d'$ is a tradeoff between losing information during the hashing step and losing information during the randomization step.  In~\cite{ldpprimitive}, it is found that the optimal choice of $d'$ is $e^\epsilon+1$.

In \olh, $\perturb_{OLH}(v)=\tuple{H,\perturb_{GRR}(H(v)}$, where $H$ is randomly chosen from a family of hash functions that hash each value in $\Domain$ to $\{1\ldots d'\}$, where $d'=\lceil e^\epsilon + 1 \rceil$, and $\perturb_{GRR}$ is the perturb algorithm used in generalized randomized response, with probability $p=\frac{e^\epsilon}{e^\epsilon+d'-1}$.

The frequency oracle outputted by $\aggregate_{OLH}$ in this protocol works as follows.  Let $\tuple{H^j,y^j}$ be the report from the $j$'th user.
For each value $v\in \Domain$, the oracle first computes $I_v=|\{j\mid H^j(v) = y^j\}|$.  That is, $I_v$ is the number of reports that ``supports'' that the input is $v$.
The oracle then outputs

%

\begin{align}
\label{eqn:est}
\frac{I_v - n/d'}{p-1/d'}.
\end{align}


The variance of this estimation is
\begin{equation}\label{var_olh}
\frac{4e^\epsilon}{(e^\epsilon-1)^2}\cdot n.
\end{equation}
Compared with \eqref{var_grr}, the factor $d-2+e^\epsilon$ is replaced by $4 e^\epsilon$.  This suggests that for smaller $d$, one is better off with \grr; but for large $d$, \olh is better and has a variance that does not depend on $d$.

We point out that using \olh, each invocation of the frequency oracle takes time linear in the population size.  Furthermore, the computations needed for recovering the frequency of one value are independent from those needed for recovering that of another value.

\mypara{The importance of group size.}
One may notice that the above frequency oracles under LDP all have estimation variance that is linear in $n$, which means that the standard deviation of the estimations is linear in $\sqrt{n}$.  This is shared by all protocols under LDP, and is a fundamental accuracy cost one has to pay in order to achieve LDP~\cite{chan2012optimal}.  What this means, however, is that LDP protocols can be useful only when the group size $n$ is large, and LDP protocols are meaningful only for the frequent values.

We now use some concrete numbers to make these points clear.  For example, to recover a value that is possessed by $0.1\%$ of the population, we have the true count being $0.001n$.  Assuming we choose $\epsilon$ such that $e^\epsilon=10$, then using OLH the standard deviation is $\sqrt{\frac{40}{81} n} \approx 0.7 \sqrt{n}$.  If we desire that the true count is at least 3 times the standard deviation, then we require $0.001n \geq 3\times 0.7 \sqrt{n}$, or $n \geq 4,410,000$.  This suggest that with about $4.5$ million users, we can recover meaningful frequencies for values that appear in at least $0.1 \%$ of the population.  Quadrupling the population size would enables us to reduce this $0.1 \%$ sensitivity threshold by half.
While theoretically there are up to $1000$ values with frequencies $0.1 \%$ or higher, in most distributions there are likely no more than a few dozen of such values, because the most frequent values will appear with frequencies far higher than $0.1 \%$ and the total frequencies of infrequent values can also be substantial.




\section{Problem Definition and Existing Methods}
\label{sec:prelim}

Recall that the aggregator wants to know the distribution of the frequent values.
When the data domain $\Domain$ is relatively small, having a frequency oracle protocol suffices, as the aggregator can invoke the frequency oracle for all values in $\Domain$, and identify the frequent ones.  However, in many applications, the data domain $\Domain$ is very large, e.g., $2^{128}$ when the input values have 16 bytes.  Enumerating through all values in them is computationally infeasible.

In this paper we focus on the problem of identifying frequent values under the LDP setting when the input domain is large.  For simplicity, we assume that each value is represented by a binary string of length $m$, although our method can be easily changed to support more complicated structure of values, such as a value consisting of multiple components.  

\subsection{Problem Definition}

The problem of finding frequent values (heavy hitters) can be defined either as identifying the top-$k$ values or finding values that appear above a certain threshold.  We assume that each user has a single value, and thus each frequency threshold can be approximately translated into a $k$ value.  Also, note that when the population size $n$ and the privacy budget $\epsilon$ is set, the number of threshold above which one can estimate frequencies accurately is more or less fixed.  We use the top-$k$ version of definition.


\begin{definition}[Top-$k$ Heavy Hitter.]
	\label{def:topk}
	Given a multi-set $\{\{v^1,v^2\ldots, v^n\}\}\in D^n$. An element $x\in \Domain$ is a top-$k$ heavy hitter if its frequency $f_x=\frac{|\{j|j\in[n]\wedge v^j=x\}|}{n}$ is ranked among top $k$ frequencies of all possible values.
\end{definition}

Suppose that each user has a length $m=128$ binary string $v$ as input value, the naive approach of querying the frequency of each string requires $2^{128}$ oracle queries and is infeasible.
The goal is to identify a set of candidates from the domain $\Domain$,
such that it is computationally feasible to query the frequency oracle.
\subsection{Strawman Method}
\label{subsec:straw}


To better understand the protocols proposed in~\cite{Bassily2015local,rappor2}, we start by describing a strawman method for identifying a smaller set of candidates for frequent values.  An intuitive method is to divide and conquer.  Specifically, a length-$m$ value is divided into $g$ equal-size segments, each of length $s=m/g$.  For example, when $g=8,m=128$, each segment has $s=16$ bits.  Borrowing Python list syntax, we use the notation $v[i:j]$ to denote the segment of $v$ starting at the $i$'th bit and stopping at (and including) the $j-1$'th bit.  Thus $v[0:m]$ represents the complete $v$.

In the strawman protocol, each user randomly chooses a segment to report.  More specifically, the user randomly chooses $1\leq \alpha \leq g$, and reports
$$\langle\perturb(v), \alpha,\perturb(v[(\alpha-1)s:\alpha s])\rangle,$$
where $\perturb$ can be any perturb function, although it is natural to use \olh.  
That is, the whole population is divided (by their own random choices) into $g$ groups, each reporting on one segment.

The aggregator first queries the frequency of each length-$s$ binary string in each of the $g$ segments, issuing a total of $2^{s}\times g$ oracle queries, and identifying the frequent patterns in each segment.
Let $C_1,\ldots,C_g$ denote the frequent patterns for the $g$ segments.
The candidate set $C$ is the Cartesian product of $C_i$, i.e., $C=C_1\times C_2\times\ldots\times C_g$,
where Cartesian product operation $\times$ is defined as $C_i\times C_j=\{c_i||c_j:c_i\in C_i, c_j\in C_j\}$, and $||$ is the string concatenation operation. Finally, the aggregator queries frequencies of these candidates, using the full string reports $\perturb(v)$.

The main shortcoming of this method is that, if we identify $k$ candidates from each of $C_1,\ldots,C_g$, the candidate set $C$ has size $k^g$.  When $m$ is large, $g$ is not very small, and the candidate set $C$ is still too large to be enumerated.  The protocols proposed in~\cite{Bassily2015local,rappor2} can be viewed as taking two different approaches in further improving this method.


\subsection{The Segment Pairs Method (\spm)~\cite{rappor2}}

The approach taken by Google's team for the RAPPOR system improves upon the above strawman protocol by having each user report a pair of two randomly chosen segments, instead of reporting only one segment.  We call this the Segment Pair Method (\spm).

In \spm, the length-$m$ value is divided into $g$ segments of length $s=m/g$.
%
In addition to reporting the overall value $v$, a user also randomly chooses two segments to report.  More specifically, the user randomly chooses $1\leq \alpha \ne \beta \leq g$, and reports
 $$\langle\perturb(v),\alpha,\beta,\perturb(v[(\alpha-1)s:\alpha s]),\perturb(v[(\beta-1)s:\beta s])\rangle.$$
That is, the user runs three reporting protocols in parallel, each using one third of privacy budget.  Since each user randomly chooses $2$ out of $g$ segments to report, the population is divided into ${g\choose 2}$ groups, each reporting for one pair of segments.  
When $n$ users are reporting, one expects that about $\frac{n}{g/2}$ users report on each segment, and about $\frac{n}{g(g-1)/2}$ users report each pair of segments.

%
%

The aggregator first identifies the frequent patterns in each of the $g$ segments.  Then, it queries, for each pair $1\leq i,j \leq g$ of segments, the frequency for the values in $C_i\times C_j$ and identifies the value pairs that are frequent in segments $i,j$.
From the frequent value pairs for each pair of segments, the aggregator recovers candidates for frequent values for the whole domain, using the a priori principle that if a value $v\in \Domain$ is frequent, every pair of its segments must also be frequent.
Because of this filtering by segment pairs, the size of $C$ is typically small enough to query the frequency of each value in it.

The main limitation of this method is that, since the length of each segment must be relatively small (one needs to enumerate through all possible values for each segment), when the domain is large, there are too many pairs of segments.  As a result, the number of users reporting on each location-pair is limited, making it difficult to accurately identify frequent value pairs.  For example, when $g=8$, each pair has only about $\frac{n}{28}$ users.  And when $g=16$, each pair has only about $\frac{n}{120}$ users.

\begin{figure*}[t]
	\centering
	\subfigure[Strawman
				: Users partitioned into ${4 \choose 1}=4$ groups, each reporting one segment
				]{
		\label{subfig:strawman}
		\includegraphics[width=0.3\columnwidth]{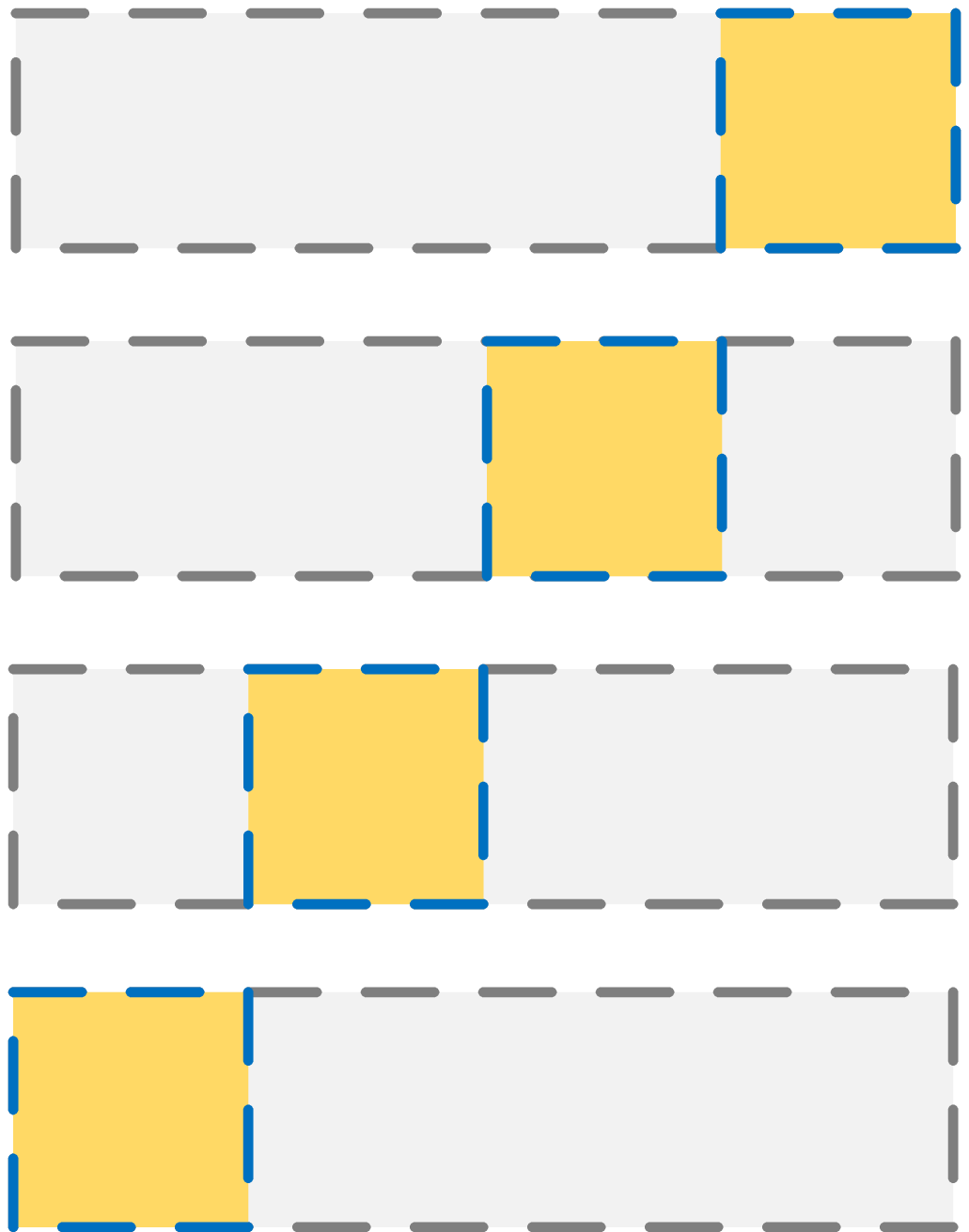}
	}
	\hspace{1.5cm}
	\subfigure[\spm
				: Users partitioned into ${4 \choose 2}=6$ groups, each reporting one pair of segments.
				]{
		\label{subfig:spm}
		\includegraphics[width=0.3\columnwidth]{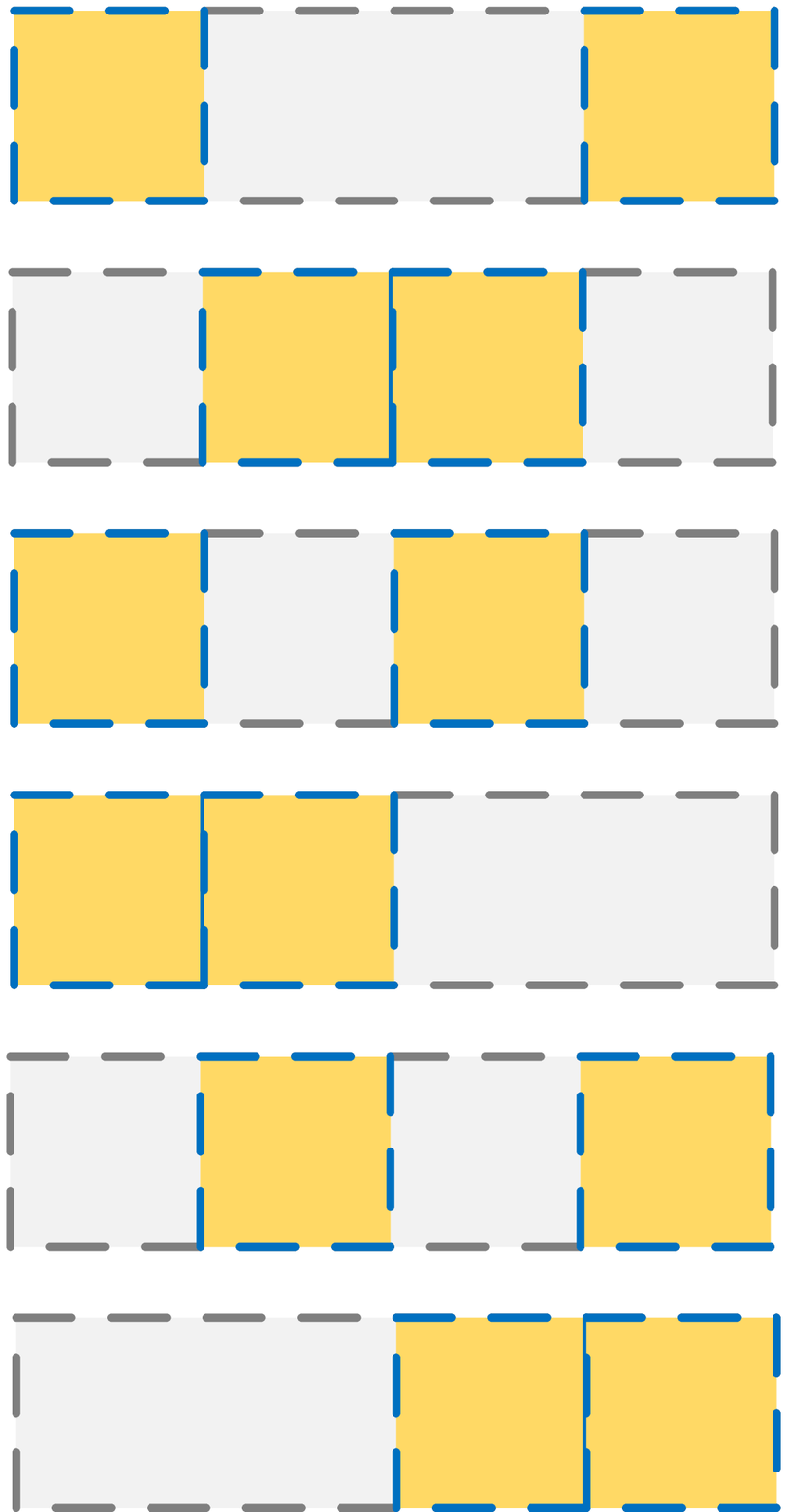}
	}
	\hspace{1.5cm}
	\subfigure[\hash
				: Users partitioned into $64$ groups, each reporting one bit on multiple channels.
				]{
		\label{subfig:hash}
		\includegraphics[width=0.375\columnwidth]{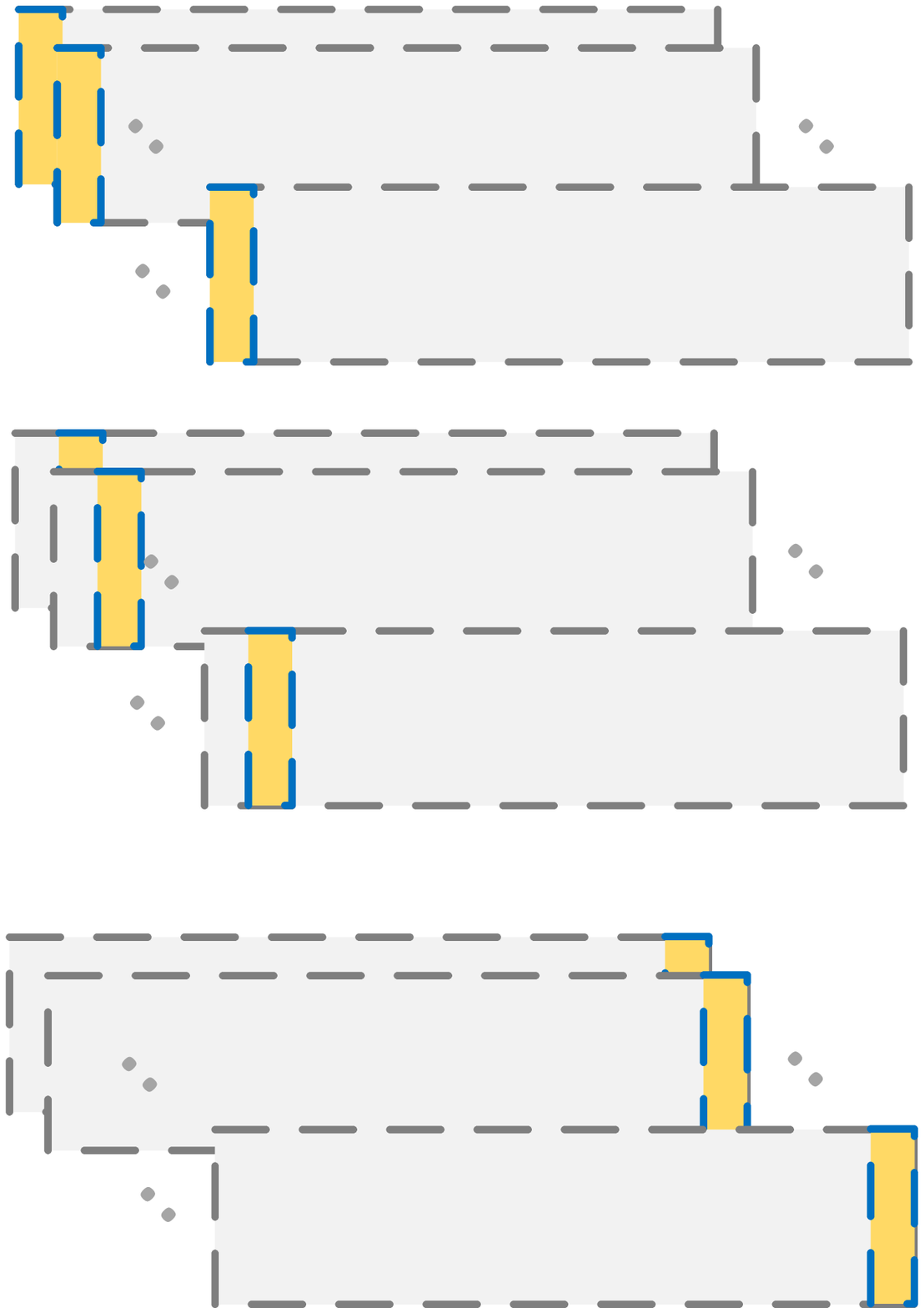}
	}
	\hspace{1.5cm}
	\subfigure[\itergroup
				: Users partitioned into $4$ groups, each reporting a prefix.
				]{
		\label{subfig:iterprod}
		\includegraphics[width=0.3\columnwidth]{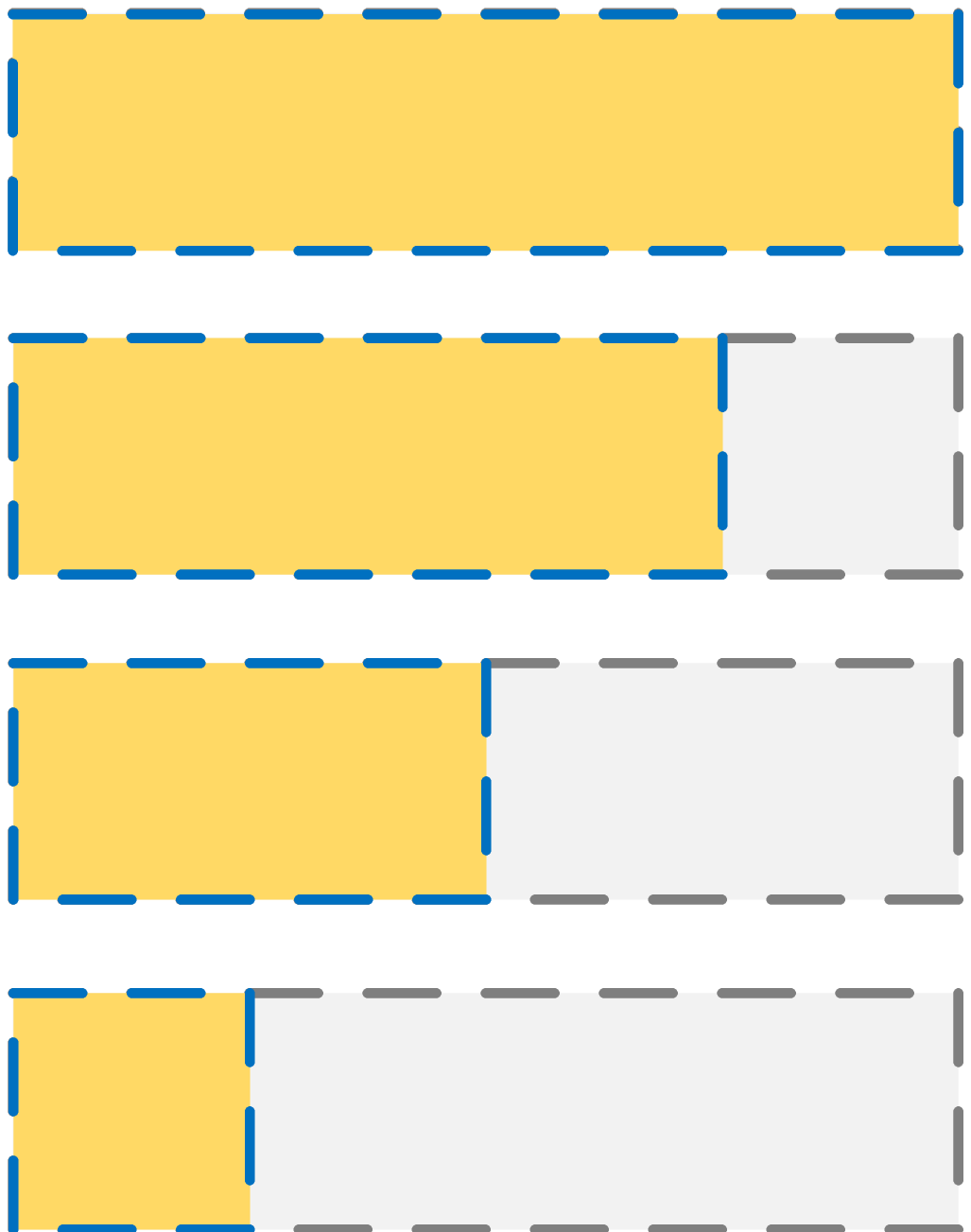}
	}
	\caption{
		Illustration of how candidate frequent values are generated by different methods.  Assuming that each value has $64$ bits, often divided into $4$ segments of $16$-bits each.  The first three methods require in addition the ability of to estimate the frequency of any value in the whole domain.  In \spm and \hash, this is done by having each user report both the segments in the figure and the whole value, dividing the privacy budget.  We show that it is better to divide the population to have another group of users who report only the whole value.  In \itergroup, the last group serves the purpose.
	}
	\label{fig:all_methods}
\end{figure*}

\subsection{The Multiple Channel Method (\hash)~\cite{Bassily2015local}}
\label{subsec:hash}

Bassily and Smith proposed an approach which we call Multiple Channel Method (\hash)~\cite{Bassily2015local}.  Our description of \hash below simplifies that in~\cite{Bassily2015local}, and is equivalent to it.  This approach can be viewed as improving upon the strawman approach by using a technique to separate the values into multiple channels so that with high probability each channel has at most one frequent value, and then identifying this candidate frequent value by identifying each bit of it.

The approach uses a hash function $H$ that maps each input value $v$ to an integer in $\{1\ldots h\}$.  We say that $v$ is mapped to the channel $H(v)$.  The value $h$ needs to be large enough to ensure that the probability that any two frequent values are mapped to the same channel is low.  Each user with input $v$ randomly selects $\ell$ such that $0\leq \ell <m$  and reports:
$$\tuple{\perturb(v), \ell, b_1,b_2,\cdots,b_h}$$
The privacy budget $\epsilon$ is divided into two parts $\epsilon_1+\epsilon_2=\epsilon$.  Sending the value $v$ uses $\epsilon_1$; $b_1,b_2,\cdots b_h$ are computed such that when $j\ne H(v)$, $b$ is a randomly sampled bit, and when $j=H(v)$, $b$ is a perturbed value of the $v[\ell]$, flipped with probability $q= \frac{1}{e^{\epsilon_2}+1}$.
That is, each user chooses one of the $m$ bit to report.  

From each channel, the aggregator extracts a candidate frequent value by taking the majority vote for each bit.  The aggregator then queries the frequency of these candidates and outputs the frequent values.

One main limitation of this approach is that since each user reports a single bit, only a small number of users are reporting for each bit.  For example, with $m=128$, only $\frac{n}{128}$ users participate in the determination of candidate for each bit.  Furthermore, to correctly recover the candidate value, each of the $128$ bits must be recovered correctly.  (While error correction code is suggested in~\cite{Bassily2015local}, that will further reduce the group size and increase the probability that any one bit is recovered correctly.)  This limitation can be addressed by having each user report a bigger block (such as 16 bit) at a time, which does improve the accuracy.

Another limitation is that since one identifies a single candidate from each channel, each user has to report on multiple channels, and the oracle queries must be made on all $h$ channels.  This adds a multiplicative factor of $h$ to the communication and computation overheads.

\section{Proposed Solution}
\label{sec:methods}




In both \spm and \hash, to deal with the challenge of large domains, a bit string input is divided into \emph{non-overlapping} segments so that one can recover frequent patterns in each segment.  These patterns need to be combined into a set of candidate frequent values.  \spm does this by making each user report a pair of segments, dividing the population into ${g \choose 2}$ groups.  \hash does this by using multiple channels so that within each channel one focuses on identifying a single candidate frequent value.

We observe that instead of dividing a bit string into non-overlapping segments, one can have these segments \emph{overlapping}.  
In our proposed method, which we call Prefix Extending Method (\itergroup), users in each group report a prefix of her value.  
Figure~\ref{fig:all_methods} illustrates the differences between the four methods we have discussed.  The main advantage of \itergroup over other methods is that when $m$ is long, one needs to divide the population only into $g$ groups.  

\subsection{Prefix Extending Method (\itergroup)}
\label{subsec:group}

The \itergroup method is parameterized by two parameters $\gamma$ and $\eta$, which are positive integers.
A user is randomly assigned into one of $g$ groups, where $g=\lceil\frac{m-\gamma}{\eta}\rceil$.  The assignment can be made by the aggregator, or having each user selecting a group at random.  Users in the $i$'th group where $1\leq i\leq g$ report
$$\langle i,\perturb(v[0:\gamma+i\eta])\rangle.$$

Let $D_1=\{0,1\}^{\gamma+\eta}$, the aggregator uses the first group's reports to identify which values in $D_1$ are frequent prefixes.  Let $C_1$ be the result.  It then constructs $D_2=C_1 \times \{0,1\}^\eta$, which are candidates for longer frequent prefixes, and uses the second group's reports to identify the frequent ones in $D_2$ as $C_2$.  This continues until the last step where $C_g$ gives the set of frequent values.

Note that here we assume that we have no domain knowledge about the underlying values and thus represent the values as bit strings and divide it into equal-length segments.  The basic idea of \itergroup, where one iteratively find portions of the whole values that are frequent, can be applied in other contexts.  In a given application, one can take advantage of  domain-specific knowledge to define segments differently.  For example, one can use the domain knowledge to eliminate candidates that are impossible.  If the values have internal structures such as one component can have values of different lengths, one can also extend that component in one step and test values of different lengths.  In this paper, we focus on the binary string setting.

\subsection{Protocol Analysis}
\label{sec:analyze}

The \itergroup protocol has two parameters $\gamma$ and $\eta$.  Typically, $\gamma$ should be slightly larger than $\eta$, to make the 
candidate set size roughly the same in each step.  The choice of $\eta$, however, is very important.  Larger $\eta$ would mean higher computational cost.  
Furthermore, by having a large $\eta$, there will be fewer groups, and thus more users in each group, making the estimation in each step more accurate;
on the other hand, there will be more values to consider in each step, thus the probability a non-heavy hitter is identified is increased.
    %
We now analyze the utility to optimize the choice of $\eta$.


\subsubsection{Metric}
To compare utility when using different $\eta$ values, we use the following utility measurements.

\textbf{F-measure (\fone)}.
Define $v_j$ as the $j$-th most frequent value.
The ground truth for top $k$ values is $C_T=\{v_1,v_2,\ldots,v_k\}$.
Denote the $k$ values identified by the protocol using $C_g$.
$C_T\cap C_g$ is the set of real top-$k$ values that are identified by the protocol,
and $C_T\cup C_g$ is the union of the two sets.
We use the widely used F-measure~\cite{manning2008introduction} which
is the harmonic mean of precision and recall, i.e.,
\begin{align*}
&\fone=\frac{2}{{1}/{P}+{1}/{R}}=\frac{2PR}{P+R}\\
\mbox{where }&P=\frac{|C_T\cap C_g|}{|C_g|}, R=\frac{|C_T\cap C_g|}{|C_T|}
\end{align*}

We note that when $|C_T| = |C_g|$, the precision $P$ equals the recall $R$,
and the F-measure equals the precision, as well as 1 minus the false negative rate.

\textbf{Normalized Cumulative Rank ($\mathsf{NCR}$)}.
The F-measure uses only the unordered set $C_T$ as the ground truth.
As a result,
missing the value with the highest frequency is penalized the same as missing any others.
To address this limitation, we assign a quality function $q(\cdot)$ to each value,
and use the Normalized Cumulative Gain ($\mathsf{NCG}$) metric~\cite{jarvelin2002cumulated}:

\begin{align*}
{\mathsf{NCG}} = \frac{\sum_{v\in C_g} q(v)}{\sum_{v\in C_T} q(v)}.
\end{align*}

We instantiate the quality function using $v$'s rank as follows:
the highest ranked value has a score of $k$ (i.e., $q(v_1)=k$), the next one has score $k-1$, and so on;
the $k$-th value has a score of 1, and all other values have scores of 0.
To normalize this into a value between 0 and 1,
we divide the sum of scores by
the maximum possible score, i.e., $\frac{k(k+1)}{2}$.
This gives rise to what we call the Normalized Cumulative Rank ($\mathsf{NCR}$);
this metric uses the true rank information of the top-$k$ values.

Both F-measure and $\mathsf{NCR}$ are in the
range $[0.0, 1.0]$, where higher values indicate better accuracy.
We present results using these metrics and observe that the correlation
among them is quite stable.

\textbf{Unified Utility Score.}
\label{subsec:opt}
We express the utility scores as the weighted average of the identification probability of each heavy hitter, that is,
\begin{align}
\sum_{j=1}^{k}\left(w_j\cdot\prod_{i=1}^{g}\Pall{i}{j}\right)\label{eqn:utility_score},
\end{align}
where $\Pall{i}{j}$ is the identification probability for the $j$-th most frequent value $v_j$ in step $i$, that is, $v_j[0:\gamma+i\eta]\in C_i$.
We will elaborate $\Pall{i}{j}$ later.
The overall identification probability is the product of that of each phase.

Different metrics can be expressed by different weights.
In the F-measure, $w_j=\frac{1}{k}$, and for $\mathsf{NCR}$,
where the higher ranked value receives greater weight,
$w_j=\frac{k+1-j}{\sum_{l=1}^{k}k+1-l}$.

In our analysis below, we assume that the identification of the heavy hitters are mutually-independent.  Technically this is not true.  
If one value has been identified as a heavy bitter, the probability that another one is identified will be slightly lower, since we are identifying $k$ heavy hitters. 
However, when $k$ is not very small, this effect is small and can be ignored for our purpose.  We will empirically verify the correctness of this approximation.

\subsubsection{Assumptions and Constraints}
\label{subsec:constraints}
We first simplify \itergroup by initiate some parameters.
Recall that there are two main parameters $\gamma, \eta$ in \itergroup.
Users are partitioned into $g=\lceil\frac{m-\gamma}{\eta}\rceil$ groups.
We assume that the number of users in all groups are the same; thus, the number of users in the $i$'th group is $n[i]=n/g$. 
We fix the size of output in each stage to be $|C_i|=k$.  We further fix $\gamma=\lceil \log_2 k \rceil$.
Thus the aggregator makes $|D_i|=2^{\gamma+\eta}=k\cdot 2^\eta$ queries to the frequency oracle in each step,
and the only parameter left for us to choose is $\eta$.

To calculate the actual utility scores, we have to make some assumptions of the dataset distribution.
This is because the significance (frequency) of the heavy hitters will affect the utility measure.
For example, in an almost uniformly distributed dataset, it is hard to find out the most frequent $k$ values,
since the frequency differences are very small.
We also limit the maximum allowed number of frequency oracle queries.  

\subsubsection{Approximate Identification Probability}
\label{subsec:iden_prob}

We now calculate $\Pall{i}{j}$, the probability $v_j$'s prefix is identified in step $i$, i.e., $v_j[0:\gamma+i\eta]\in C_i$.

We first show the estimation of a value is a random variable.
Assume the true frequency of $v_j$ is $f_j$.
$n[i]=n/g$ users are randomly assigned to report the first $\gamma+i\eta$ bits of their private value,
and each of them possesses the value $v_j$ with probability $f_j$.
By \eqref{eqn:est}, 
estimation of $v_j$ is only determined by the ``support'', $I_{j}$, it receives.
Since we only care about the relative ranks of the estimations,
we focus on $I_j$ and use it as estimation of $v_j$.
For each user, if he has value $v_j$ (with probability $f_j$),
his report will ``support'' $v_j$ (reports $H^j(v_j)$ in terms of \olh) with probability $p$;
otherwise his report will ``support'' $v_j$ with probability $q=1/d'$. 
Therefore, $I_{j}$ can be seen as the summation of $n[i]$ binomial variables,
whose probability of being $1$ is $p_j =
p\cdot f_j+q\cdot(1-f_j)$.
In most cases (as long as $n[i]$ is large),
we can approximate $I_j$ using normal distribution with
mean $\mu_j[i]=n[i]\cdot p_j$
and variance $\sigma^2_j[i]=n[i]\cdot p_j\cdot (1-p_j)$.

We then calculate the probability $v_j[0:\gamma+i\eta]$'s estimation is ranked on top $k$.
Since $I_j$ is a normal random variable,
we know the probability a value is estimated above any threshold value $T$,
that is, $\Pr{I_j>T}$.
For all the non-heavy hitters,
summing this up gives us the expected number of values that are estimated above $T$.
If this expected number is less than $k$,
then $\Pr{I_j>T}$ is the probability its estimation is ranked on top $k$.
We use $T_k[i]$ to denote this threshold value.
For efficiency, we assume that among all the values to be tested, $D_i$,
all the $N[i]=|D_i|-k=k(2^\eta-1)$ non-heavy values have zero frequencies
(this is especially safe when $N[i]$ is large).
Therefore, $T_k[i]$ can be calculated by the inverse of cumulative density function:
\begin{align*}
T_k[i]=&-\Phi^{-1}\left(\frac{k}{N[i]}\right)\cdot \sigma_0[i]+\mu_0[i]
\end{align*}
where $\sigma_0^2[i]=n[i]\cdot q\cdot (1-q)$, $\mu_0[i]=n[i]\cdot q$ denotes the variance and mean of these $N[i]$ zero-mean values.

Finally, we account for the effect of other heavy hitters.
We assume the estimations of the top $k$ values are always sorted,
thus for the $j$-th value to be estimated top $k$,
there are $k-j+1$ slots,
since the top $j-1$ values are already ranked higher.
Formally,
\begin{align}
\Pall{i}{j}=&1-\Phi\left(\frac{T_{k-j}[i]-\mu_j[i]}{\sigma_j[i]}\right)=\Phi\left(\frac{\mu_j[i]-T_{k-j}[i]}{\sigma_j[i]}\right)\nonumber\\
=&\Phi\left(\frac{\mu_j[i]+\Phi^{-1}\left(\frac{k-j}{N[i]}\right)\cdot \sigma_0[i]-\mu_0[i]}{\sigma_j[i]}\right)\label{eqn:iden_prob}
\end{align}

\subsection{Instantiate \itergroup}
%

%

With \eqref{eqn:iden_prob} to instantiate \eqref{eqn:utility_score},
we can now calculate the utility scores when using different values for $\eta$.
Before the actual numerical computation, we need to assume the dataset distribution.
For instance, we can assume that the users' value form a zipf's distribution,
i.e., $f_j\propto \frac{1}{j}$.
We also need to limit the number of total queries to the frequency estimator,
e.g., $2^{20}$ queries.

The optimization inputs are:
$k$ as the number of desired heavy hitters,
$m$ as the domain size in bits,
$n$ as the number of users,
and $\epsilon$ as the privacy budget.
With all parameters available,
$\eta$ is instantiated with different values and the corresponding utility scores are calculated.
The configuration that gives best utility score will be used.

In the actual optimization, we can make small changes to more parameters,
namely, $\gamma$, and in different steps $i$, the number of users $n[i]$,
and the candidate size $|C_i|$. It is also possible to use different $\eta$ for each step,
denoted by $\eta[i]$.
It turns out that slightly change in these parameters does not affect the final result much.
What affect utility the most is the number of groups one needs to divide the users into.
When $\eta$ is very small, the overall utility will deteriorate a lot.  

As a result, in most of the system settings, the optimal configuration is 
$\gamma=\log_2 k$, and for all $i$, $|C_i|=k$, $\eta[i]=\eta$, $n[i]\approx n/g$,
where $\eta$ is the maximal integer such that the total number of queries
$2^{\gamma+\eta}\cdot g$ ($g=\lceil\frac{m-\gamma}{\eta}\rceil$)
is less than the limit.  In some extreme cases, e.g., when $k$ is very big,
$|C_i|$ is smaller than $k$, suggesting that when $k$ is too large, since it is impossible to accurately recover $k$ heavy hitters, 
one should simply try to find fewer.  On the other hand, when $k$ is small, $|C_i|$ can be large,
so that the probability prefixes of the heavy hitters are included in $C_i$ are increased.




\subsection{Observations and Design Principles}
\label{subsec:principle}

By the optimization and the supporting analysis,
we are able to answer the questions raised in the beginning of this section.
Specifically, we make some statements that can help guide the design of protocols for not only the heavy hitter problem but also other LDP problems.
%
%
%

\newcommand{\eex}{e^{\epsilon/Q}}
\newcommand{\ee}{e^{\epsilon}}
\newcommand{\eey}{e^{\epsilon-\epsilon/Q}}



\newcommand{\eer}{\frac{e^{\epsilon/r}-1}{\sqrt{e^{\epsilon/r}}}}
\newcommand{\eers}{\left(\frac{e^{\epsilon/r}-1}{\sqrt{e^{\epsilon/r}}}\right)^2}
\newcommand{\eenr}{\frac{e^{\epsilon}-1}{\sqrt{e^{\epsilon}}}}
\newcommand{\eenrs}{\left(\frac{e^{\epsilon}-1}{\sqrt{e^{\epsilon}}}\right)^2}

First of all, \itergroup assigns users to different groups,
and let each user report a prefix.
It is possible to design \itergroup such that requires each user answer all prefixes.
Since all prefixes from the same value are obviously related,
correlation information can be extracted.
By~\eqref{eqn:iden_prob}, we observe the following:
\begin{proposition}
	\label{thm:par_user_iden}
	By partitioning the users into groups and letting each group answer a separate question,
	the identification probability will be higher than having each user split privacy budget to answer all the questions.
\end{proposition}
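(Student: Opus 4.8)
The plan is to compare the two user/budget allocations directly through the identification probability \eqref{eqn:iden_prob}, which is a single increasing function $\Phi$ of a ``$z$-score''. Fix a step $i$ and a target value $v_j$ with true frequency $f_j$, and let $r=g$ be the number of groups. In the \emph{partition} scheme this step is answered by $M=n/r$ users spending the full budget $\epsilon$, whereas in the \emph{budget-splitting} scheme it is answered by all $M=n$ users, each spending $\epsilon/r$ by sequential composition. Everything else entering \eqref{eqn:iden_prob} is common to the two schemes: the number of competing candidates $N=k(2^\eta-1)$ depends only on $\eta$ and $k$, so the threshold offset $\Phi^{-1}\!\left(\frac{k-j}{N}\right)$ is identical, and $f_j,\gamma,\eta,k$ are unchanged. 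Since $\Phi$ is increasing, it suffices to show the $z$-score is strictly larger under partitioning for every $v_j$.

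First I would simplify the $z$-score. In the heavy-hitter regime $f_j$ is small, so $p_j=q+(p-q)f_j\approx q$ and $\sigma_j\approx\sigma_0=\sqrt{M q(1-q)}$; then \eqref{eqn:iden_prob} becomes $\Phi\big(\sqrt{M}\,f_j\,G(\epsilon')+\Phi^{-1}(\tfrac{k-j}{N})\big)$, where $\epsilon'$ is the budget used and $G(\epsilon')=\frac{p-q}{\sqrt{q(1-q)}}$ is the per-user signal-to-noise factor of the support count. For \olh with the optimal $d'=e^{\epsilon'}+1$ one has $p=\tfrac12$, $q=\frac{1}{e^{\epsilon'}+1}$, and a short computation gives $G(\epsilon')=\tfrac12\cdot\frac{e^{\epsilon'}-1}{\sqrt{e^{\epsilon'}}}$; equivalently the frequency-estimate variance is $\frac{1}{M\,G(\epsilon')^2}$, matching \eqref{var_olh}. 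Substituting $(M,\epsilon')=(n/r,\epsilon)$ for partition and $(M,\epsilon')=(n,\epsilon/r)$ for splitting and cancelling the common offset, the claim reduces to the scalar inequality
\[
\sqrt{\tfrac{1}{r}}\;\eenr \;>\; \eer,
\qquad\text{equivalently}\qquad
\eenrs \;>\; r\cdot\eers .
\]

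Next I would prove this inequality by convexity. Write $\phi(t)=\frac{e^t-1}{\sqrt{e^t}}=2\sinh(t/2)$, which is convex on $[0,\infty)$ with $\phi(0)=0$. For such a function $\phi(t)/t$ is nondecreasing, hence $\phi(\epsilon)\ge r\,\phi(\epsilon/r)$ for every integer $r\ge 1$, strictly when $r>1$ and $\epsilon>0$. Since $r\ge\sqrt{r}$, this gives $\eenr>\sqrt{r}\,\eer$; squaring yields $\eenrs>r\cdot\eers$ (in fact with the stronger margin $\eenrs\ge r^2\,\eers$). Therefore the $z$-score, and with it $\Pall{i}{j}$, is strictly larger under partitioning for every $v_j$ with $f_j>0$, so the aggregate utility \eqref{eqn:utility_score}, a positive-weighted sum of products of these probabilities, is larger as well.

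I expect the main obstacle to be the reduction rather than the closing inequality. The delicate points are (i) verifying that the threshold term $\Phi^{-1}(\frac{k-j}{N})$ and the candidate count $N$ are genuinely independent of the allocation, which is what makes the two $z$-scores differ only in the factor $\sqrt{M}\,G(\epsilon')$, and (ii) justifying the approximation $\sigma_j\approx\sigma_0$, which is legitimate only because heavy-hitter frequencies are small relative to $1$. I would also note that the across-value independence assumption stated before \eqref{eqn:iden_prob} is already granted, so passing from the per-value comparison to \eqref{eqn:utility_score} needs no extra argument. Once the reduction is in place the remaining work is a one-line bound, so the convexity and superadditivity of $2\sinh(t/2)$ is the real engine of the proof.
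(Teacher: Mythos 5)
Your proposal is correct, and its skeleton matches the paper's own proof: both expand \eqref{eqn:iden_prob} with the \olh parameters $p=\tfrac12$, $q=\tfrac{1}{e^{\epsilon}+1}$, use monotonicity of $\Phi$ to reduce the comparison to the two $z$-scores, invoke the small-$f_j$ regime to drop the $\tfrac{f_j}{2}\bigl(1-\tfrac{f_j}{2}\bigr)$ correction, and arrive at the same scalar inequality $E(\epsilon)>g\,E(\epsilon/g)$ with $E(\epsilon)=\bigl(\tfrac{e^{\epsilon}-1}{e^{\epsilon/2}}\bigr)^{2}$, which is exactly the paper's Lemma~\ref{lm:er}. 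Where you genuinely depart is in how that closing inequality is established. The paper substitutes $z=\sqrt{e^{\epsilon/g}}$, compares derivatives, and defers a step to an unspecified induction (its displayed chain even carries a stray $r$ where $g$ is meant), so Lemma~\ref{lm:er} is really only sketched there. Your route --- writing $\tfrac{e^{t}-1}{\sqrt{e^{t}}}=2\sinh(t/2)$, noting this is convex and vanishes at $0$, so $\phi(t)/t$ is nondecreasing and hence $\phi(\epsilon)\ge g\,\phi(\epsilon/g)$ --- is fully rigorous, more elementary, and yields the strictly stronger bound $E(\epsilon)\ge g^{2}E(\epsilon/g)$, of which the paper's lemma is an immediate corollary. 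A second, smaller difference: the paper keeps the denominators $\sqrt{1+B\,E(\cdot)}$ exact when comparing the negative threshold terms, using $E(\epsilon/g)<E(\epsilon)$ to show partitioning also wins on that term, and only sets $B\approx 0$ for the signal terms; you instead apply $\sigma_j\approx\sigma_0$ globally so the threshold offsets cancel identically. Both variants lean on the same small-$f_j$ approximation (the paper additionally backs it with numerical checks), so nothing substantive is lost, and your convexity argument would in fact be a clean strengthening of the paper's appendix.
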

We provide a proof to support our choice in the appendix.

We note that a similar observation has been made in~\cite{ldpprimitive}, based on comparing variances of frequency estimation.  
We can generalize these observations into the following principle for designing effective LDP protocols.  
\begin{principle}
In LDP setting, one should divide the user population, instead of dividing the privacy budget.
\end{principle}
We note that in the centralized DP setting, these are generally equivalent in terms of utility.  

From numerical computation of optimal parameters, we make the following observations. 
%





\begin{obersvation}
	\label{con:morebits}
	Within the query limit,
	greatest identification probability is achieved with the least number of groups.
\end{obersvation}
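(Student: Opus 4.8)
The plan is to first exploit the symmetry that the fixed-parameter regime of Section~\ref{subsec:constraints} imposes on the per-step probabilities, and thereby reduce the Observation to a one-dimensional monotonicity claim. Under the assumptions $\gamma=\lceil\log_2 k\rceil$, $|C_i|=k$, $n[i]=n/g$, and a common segment length $\eta$ for every step, none of the quantities entering \eqref{eqn:iden_prob} depend on the step index $i$: indeed $\mu_j[i]=(n/g)p_j$, $\mu_0[i]=(n/g)q$, $N[i]=k(2^\eta-1)$, while $\sigma_j[i]$ and $\sigma_0[i]$ are both proportional to $\sqrt{n/g}$. Hence $\Pall{i}{j}$ collapses to a single value $P_j$ that is independent of $i$, and the overall identification probability for $v_j$ is $\prod_{i=1}^{g}\Pall{i}{j}=P_j^{\,g}$. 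The utility score \eqref{eqn:utility_score} thus becomes $U(g)=\sum_{j=1}^{k} w_j\,P_j(g)^{g}$, so it suffices to show that each term $P_j(g)^{g}$ is non-increasing in $g$ over the feasible range.

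Second, I would make the dependence on $g$ explicit. Factoring $\sqrt{n/g}$ out of the numerator and denominator of \eqref{eqn:iden_prob} leaves
\[
P_j(g)=\Phi\!\left(\frac{\sqrt{n/g}\,(p-q)f_j+\Phi^{-1}\!\big(\tfrac{k-j}{k(2^\eta-1)}\big)\sqrt{q(1-q)}}{\sqrt{p_j(1-p_j)}}\right),
\]
where the denominator is constant in $g$ and $\eta$ is tied to $g$ through $g=\lceil(m-\gamma)/\eta\rceil$. Because $k\,2^\eta g$ is increasing in $\eta$, the query limit $Q$ caps $\eta$ from above, so ``fewest groups'' is the same as ``largest feasible $\eta$,'' i.e. the smallest feasible $g$. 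Treating $g$ as continuous with $\eta=(m-\gamma)/g$, I would study the sign of $\frac{d}{dg}\big(g\ln P_j(g)\big)=\ln P_j(g)+\frac{g}{P_j(g)}\,P_j'(g)$ and show that it is negative.

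The third, and hardest, step is controlling this derivative, because increasing $g$ has three competing effects that must be balanced. Raising $g$ (i) adds factors to the product, contributing $\ln P_j(g)<0$ since $P_j<1$; (ii) shrinks the signal term $\sqrt{n/g}\,(p-q)f_j$, pushing $P_j$ down; and (iii) through the decrease of $\eta$ raises $\tfrac{k-j}{k(2^\eta-1)}$ and hence $\Phi^{-1}(\cdot)$, pushing $P_j$ up --- the one effect opposing monotonicity. The main obstacle is bounding this third term: writing $P_j'(g)=\phi(z_j)\,z_j'(g)$ and using $\tfrac{d}{dc}\Phi^{-1}(c)=1/\phi(\Phi^{-1}(c))$, one must show that the positive threshold contribution to $z_j'(g)$ cannot overcome the combined negative contribution of the explicit $\ln P_j$ penalty and the $g^{-3/2}$ signal decay. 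I expect this to hold because the exponential growth of the candidate count $2^\eta$ makes the per-group threshold improvement marginal once $\eta$ is moderate, whereas the multiplicative penalty $\ln P_j$ is incurred afresh by every extra group.

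Finally, since the entire derivation already rests on the Gaussian approximation of $I_j$ and the independence assumption of Section~\ref{subsec:opt}, I would not attempt a fully closed-form inequality. Instead I would complement the derivative-sign argument with a direct evaluation of $U(g)$ across all feasible configurations satisfying $2^{\gamma+\eta}g\le Q$ for a representative Zipf distribution and the chosen $n,\epsilon,m,k$, confirming that $U(g)$ decreases in $g$ and is therefore maximized at the least number of groups; this matches the empirical validation strategy used elsewhere in the paper.
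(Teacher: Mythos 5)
Your proposal ultimately lands exactly where the paper does, because the paper offers no analytical proof of this Observation at all: it is presented as the outcome of ``numerical computation of optimal parameters,'' the authors state explicitly that ``these observations cannot be proven easily,'' and the only support given is empirical --- Figure~\ref{subfig:vary_r}, where the number of groups is varied from $1$ to $6$ and the computed and measured utility both degrade as $g$ grows (plus, in the evaluation, Figure~\ref{fig:eval_s} showing larger $\eta$ is better). So your final step --- sweeping $U(g)$ over all configurations satisfying $2^{\gamma+\eta}g\le Q$ --- is not a complement to the proof; it \emph{is} the paper's entire justification.

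What you add before that is correct within the paper's model and is a genuinely sharper organization than anything in the paper. Under the balanced regime ($n[i]=n/g$, common $\eta$, $|C_i|=k$, $\gamma=\log_2 k$) every quantity entering \eqref{eqn:iden_prob} is indeed step-independent, so the product in \eqref{eqn:utility_score} collapses to $P_j(g)^g$, and your explicit formula for $P_j(g)$ (factoring out $\sqrt{n/g}$ and using $p_j-q=f_j(p-q)$) coincides with the expansion the paper only derives in the appendix, for the different purpose of proving Proposition~\ref{thm:par_user_iden}. Isolating the single effect opposing monotonicity --- the threshold term $\Phi^{-1}\bigl(\tfrac{k-j}{k(2^\eta-1)}\bigr)$ becoming less negative as $\eta$ shrinks --- is also the right diagnosis. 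However, the step you leave as an expectation (that this threshold gain never overcomes the $\ln P_j$ penalty per extra factor plus the $g^{-1/2}$ signal decay) is the entire mathematical content of the claim, and you should not expect it to hold as a universal inequality: the paper itself asserts optimality of the least $g$ only ``in most of the system settings'' and notes exceptions (e.g., very large $k$, where even $|C_i|<k$ becomes optimal). Any honest derivative-sign argument would therefore need explicit regime restrictions (e.g., $f_j$ not too small relative to $\sqrt{g/n}$, $\eta$ not tiny). As written, your proposal delivers the same level of rigor as the paper --- numerics plus experiments --- but with a cleaner analytical skeleton that makes explicit which inequality would have to be established to turn the Observation into a theorem.
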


This implies that, when multiple groups are necessary, $\eta$ should be as big as possible
(within query limit).  
We distill this as a second principle for designing LDP protocols. 

\begin{principle}
When designing LDP protocols, one should minimize the number of groups one has to divide the user population into, as a large group size is a key in obtaining accurate answers.  
\end{principle}

A similar principle applies in the centralized setting as well.  By reducing the number of steps that need privacy budget, one can often achieve better results. 
In \itergroup, in order to increase the final utility,
it is inclined to allocate more users to the final group,
or minimize work by reduce $\eta$ in the final phase.
Somewhat counter-intuitively, a more balanced allocation will be better.

\begin{obersvation}
	\label{con:eq_user}
	If $\eta[i]=m/g$,
	best result is achieved when $n[i]=n/g$.
	Similarly, if $n[i]=n/g$,
	best result is achieved when $\eta[i]=m/g$.	
\end{obersvation}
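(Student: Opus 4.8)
The plan is to treat Observation~\ref{con:eq_user} as a constrained optimization of the utility score \eqref{eqn:utility_score} and show that the balanced allocation is the maximizer in each of the two symmetric cases. Write the objective as $U=\sum_{j=1}^{k} w_j\prod_{i=1}^{g}\Pall{i}{j}$ with $w_j\ge 0$ (true for both \fone and $\mathsf{NCR}$). The key structural feature is that $U$ is symmetric in the group index $i$ and, once a value $j$ is fixed, separates into a product over groups. I would first reduce to a single value: because the weights are nonnegative, it suffices to show that, under the relevant linear constraint, every per-value product $\prod_i\Pall{i}{j}$ is maximized at the \emph{same} balanced point; the weighted sum is then maximized there too, since $P_j(\mathbf{x})\le P_j(\mathbf{x}^\ast)$ for all $j$ forces $U(\mathbf{x})\le U(\mathbf{x}^\ast)$. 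Taking logarithms turns each product into a separable sum $\sum_i h_j(\cdot)$ subject to a single linear constraint ($\sum_i n[i]=n$ in the first case, $\sum_i\eta[i]=m-\gamma$ in the second). By Jensen's inequality, $\sum_i h_j(x_i)\le g\,h_j\!\big(\tfrac1g\sum_i x_i\big)$ with equality iff all $x_i$ are equal, so whenever $h_j$ is concave the uniform allocation is optimal, and crucially this uniform point is the common maximizer for every $j$. Both parts therefore reduce to a one-dimensional concavity claim (equivalently, Schur-concavity of the separable log-objective).

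For the first part, fix $\eta[i]=m/g$ so that $N[i]$ is constant across groups; substituting $\mu_j[i]=n[i]p_j$ and $\sigma_j[i]=\sqrt{n[i]p_j(1-p_j)}$ (and likewise the zero-frequency reference $\mu_0,\sigma_0$) into \eqref{eqn:iden_prob} gives $\Pall{i}{j}=\Phi\big(a_j\sqrt{n[i]}+b_j\big)$ with $a_j=\frac{p_j-q}{\sqrt{p_j(1-p_j)}}>0$, since $p_j=q+(p-q)f_j>q$. I would then invoke the standard fact that the Gaussian CDF $\Phi$ is log-concave, so $\log\Phi$ is concave and nondecreasing; composing a concave nondecreasing function with the concave map $x\mapsto a_j\sqrt{x}+b_j$ shows $h_j(x)=\log\Phi(a_j\sqrt{x}+b_j)$ is concave in $x$. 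This settles the optimality of $n[i]=n/g$.

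For the second part, fix $n[i]=n/g$ so that $\mu$ and $\sigma$ are constant and the only dependence on $\eta[i]$ enters through $N[i]=k(2^{\eta[i]}-1)$; the argument of $\Phi$ becomes $c_j+d_j\,\Phi^{-1}\!\big(\tfrac{k-j}{k(2^{\eta[i]}-1)}\big)$ with $d_j>0$. By the same composition rule it suffices to show the inner map $\eta\mapsto\Phi^{-1}\!\big(\tfrac{k-j}{k(2^{\eta}-1)}\big)$ is concave. This is the main obstacle: unlike $\sqrt{\cdot}$, the quantile $\Phi^{-1}$ does not preserve concavity under composition in general, and a direct second-derivative computation leaves the sign indeterminate, since the positive curvature of the inner probability $\tfrac{k-j}{k(2^{\eta}-1)}$ and the negative value of the quantile push the second derivative of the composition in opposite directions. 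I would resolve it in the regime of interest — small tail probabilities, i.e. $\eta$ not tiny — using the Gaussian tail asymptotic $\Phi^{-1}(w)\approx-\sqrt{2\ln(1/w)}$ together with $\tfrac{k-j}{k(2^{\eta}-1)}\approx \mathrm{const}\cdot 2^{-\eta}$, which yields $\Phi^{-1}(\cdot)\approx-\sqrt{2\ln 2}\,\sqrt{\eta}$ up to lower-order terms, a concave function of $\eta$. Feeding this into the composition rule again gives concavity of $h_j(\eta)$ and hence optimality of $\eta[i]=m/g$. Because this last step is only asymptotic, I would, consistent with the paper's methodology, flag it as an approximation and corroborate it numerically over the relevant parameter ranges rather than claim exact concavity for every $\eta$.
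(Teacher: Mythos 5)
Your proposal should first be measured against what the paper actually does: the paper offers \emph{no} proof of Observation~\ref{con:eq_user}. It states explicitly that these observations ``cannot be proven easily'' and supports them only empirically, via numerical computation of optimal parameters and the symmetry of the curves in Figure~\ref{fig:match_config}. Your attempt is therefore a genuinely different, analytical route, and its first half succeeds. The reduction is sound: with nonnegative weights it suffices that each per-value product is maximized at the same balanced point, and taking logarithms turns each product into a separable symmetric sum to which Jensen applies. For the $n[i]$ half, once $\eta[i]$ is equal across groups, $N[i]$ (hence your $b_j$) is group-independent, the argument of $\Phi$ in \eqref{eqn:iden_prob} is exactly $a_j\sqrt{n[i]}+b_j$ with $a_j=(p_j-q)/\sqrt{p_j(1-p_j)}>0$, and $\log\Phi$ (nondecreasing, concave by log-concavity of the Gaussian CDF) composed with the concave map $x\mapsto a_j\sqrt{x}+b_j$ is concave. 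Within the approximations the paper itself makes (normality, independence across values and groups), this settles the first claim rigorously---which is more than the paper provides.

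The second half, however, contains a genuine error: you assert that $-\sqrt{2\ln 2}\,\sqrt{\eta}$ is a concave function of $\eta$, but it is convex, since $\frac{d^2}{d\eta^2}\bigl(-\sqrt{\eta}\,\bigr)=\tfrac14\eta^{-3/2}>0$ (negating a concave function yields a convex one). So under your own tail approximation the inner map $\eta\mapsto\Phi^{-1}\bigl(\tfrac{k-j}{k(2^{\eta}-1)}\bigr)$ is convex, and the composition rule you invoke---concave nondecreasing outer with \emph{concave} inner---is simply not available; the argument collapses at exactly the point you earlier (correctly) flagged as indeterminate. The conclusion may still be salvageable, but it requires analyzing the trade-off rather than a composition rule: writing $u(\eta)=c_j-A\sqrt{\eta+c'}$ with $A=d_j\sqrt{2\ln 2}$ and $\lambda=\phi/\Phi$, one finds $h_j''=\lambda'(u)(u')^2+\lambda(u)u''=\frac{A\lambda(u)}{4(\eta+c')}\bigl(\frac{1}{\sqrt{\eta+c'}}-(u+\lambda(u))A\bigr)$, so concavity holds if and only if $(u+\lambda(u))\,A\sqrt{\eta+c'}>1$---a condition that does appear to hold throughout the relevant parameter ranges (using $\lambda(u)>-u$ and $\lambda'=-\lambda\cdot(u+\lambda)$), but it must be established, not inherited from the composition rule. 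As written, part two is not a proof, and your stated fallback---numerical corroboration---is precisely the level of support the paper already gives for this observation, so the proposal adds nothing beyond the paper on that half.
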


The previous observation is made when one of $n[i]$ and $\eta[i]$ is balanced
(i.e., $n[i]=n/g$ or $\eta[i]=m/g$).
When either is unbalanced (perhaps due to e.g., the leftover bits in the final phase),
theoretical optimal allocation of the other is unknown.
These observations cannot be proven easily.
We provide empirical support in the next section
(specifically, Figure~\ref{fig:match_config}).

\section{Validation of Analysis}
\label{sec:ver}

In this section,
we run empirical experiment to validate the analysis.
That is, we want to show the analysis matches empirical results
(mainly~\eqref{eqn:utility_score} and~\eqref{eqn:iden_prob}).
Moreover, we verify the design principles we made as observations in Section~\ref{subsec:principle}.

%
%
We generate synthetic datasets follows zipf's distribution
(i.e., the $j$-th most frequent value has frequency $f_j$ proportional to $\frac{1}{j^{1.5}}$).
The values of the heavy hitters are random.
We then evaluate \itergroup on the synthetic datasets with reasonable parameters.
All experiment runs 100 times.
For each setting, we use points with error bar (if any) for empirical results, 
and a thin line with the same color for analytical results.
The validation is carried out in the following three stages.

\begin{figure*}[t]

	\subfigure[Different $n$]{
		\label{subfig:vary_n_multi}
		\includegraphics[width=0.66\columnwidth]{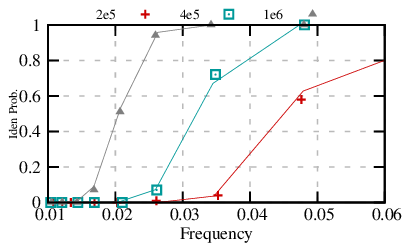}
	}
	\subfigure[Different $m$]{
		\label{subfig:vary_m_multi}
		\includegraphics[width=0.66\columnwidth]{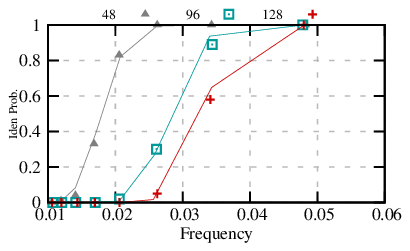}
	}
	\subfigure[Different $\epsilon$]{
		\label{subfig:vary_e_multi}
		\includegraphics[width=0.66\columnwidth]{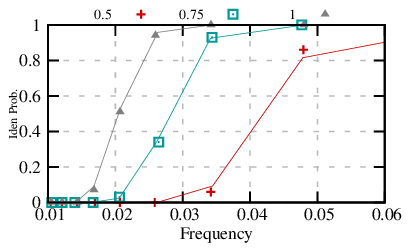}
	}
	\caption{
		Overall identification probabilities for values with different frequencies. 
        Dots represent empirical results, and lines shows analytical results.
		We fix $n=1000000,\epsilon=1,m=64$ as the default setting.
		In each sub-figure, we vary one of them while keeping the others. 
	}
	\label{fig:vary_f_multi}
\end{figure*}

\begin{figure*}[t]
	\subfigure[Vary $g$]{
        \label{subfig:vary_r}
        \includegraphics[width=0.66\columnwidth]{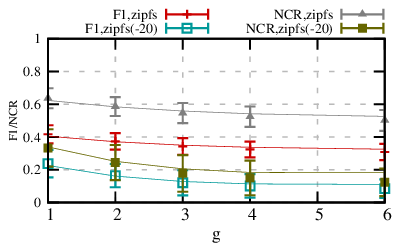}
    }
    \subfigure[Vary {$\eta[1]$}]{
        \label{subfig:vary_s}
        \includegraphics[width=0.66\columnwidth]{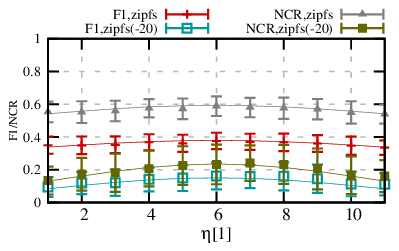}
    }
    \subfigure[Vary {$n[1]$}]{
        \label{subfig:vary_n1}
        \includegraphics[width=0.66\columnwidth]{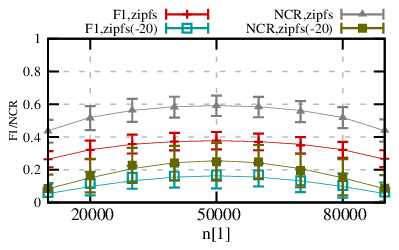}
    }
    
   	\caption{
   		Utility score from different configurations.
        We fix $k=16$, and use the default configuration of $n=100000, m=16,\epsilon=1$. 
        In each sub-figure, we vary $r, s[1], n[1]$,
        and plot \fone, \ncr for two distributions.
    }
    \label{fig:match_config}
\end{figure*}

\begin{figure*}[t]
	\subfigure[Vary $n$]{
		\label{subfig:vary_n}
		\includegraphics[width=0.66\columnwidth]{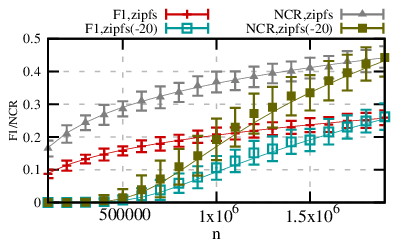}
	}
	\subfigure[Vary $m$]{
		\label{subfig:vary_m}
		\includegraphics[width=0.66\columnwidth]{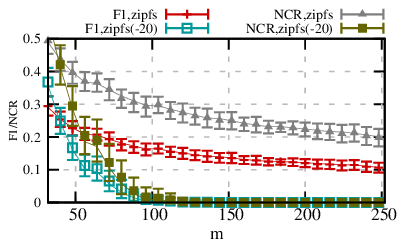}
	}
	\subfigure[Vary $\epsilon$]{
		\label{subfig:vary_e}
		\includegraphics[width=0.66\columnwidth]{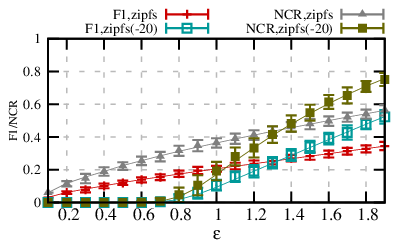}
	}
	
	\caption{
		Utility score from the optimal configuration.
		We fix $k=32$, and use the default configuration of $n=1000000, m=64,\epsilon=1$. 
		In each sub-figure, we vary $n,m,\epsilon$,
		and plot \fone, \ncr for two distributions.
	}
	\label{fig:match_optimal_config}
\end{figure*}

\subsection{Identification Probability.}
%
We first verify the correctness of the identification probability (i.e., \eqref{eqn:iden_prob}).
We generate $n=100000$ data points each with $m=64$ bits.
Users report with privacy budget $\epsilon=1$.


Note that with $m=64$ bits or more, it is not feasible to estimate the whole domain in a single round.
For the purpose of demonstration, 
we instantiate \itergroup with $\gamma=5,\eta=10,|C_i|=32$.

In Figure~\ref{fig:vary_f_multi}, 
we show the identification probability of values with different frequencies.
We observe that, first of all, the analysis matches the empirical result pretty well.
Moreover, when the other parameters remain the same,
the more users (larger $n$),
the shorter the domain length (smaller $m$),
or the larger the privacy budget (larger $\epsilon$), 
the better the overall result.
This trend also matches the analysis.

\subsection{Utility Scores under Different Configurations.}
Having verified the analytical identification probability~\eqref{eqn:iden_prob} for each single value,
we now verify the analytical utility score~\eqref{eqn:utility_score}, which is the linear combination of the identification probabilities for each heavy hitter.


We generate $n=100000$ data points each represented by $m=16$ bits,
and use $\epsilon=1$.
Besides the zipfs distribution, 
we also test on another similar distribution, `zipfs(-20)'.
The difference is that in zipfs(-20), the most frequent 20 values are dropped.
As a result, the first few heavy hitters are not that significantly frequent,
but the following ones occurs more frequently.

The domain of this dataset is made smaller mainly for the purpose of making it clear when we compare different configurations of \itergroup.
We plot $\fone$ and $\ncr$ scores of the top $k=16$ heavy hitters in Figure~\ref{fig:match_config}.
It can be seen that all points (empirical results) lie on the line (analytical results),
verifying that the analysis is accurate under different configurations.

Specifically, in Figure~\ref{subfig:vary_r},
we vary the number of rounds from $1$ to $6$, each outputting $16$ candidates.
As a result, with more rounds of test, the overall result becomes worse.
This also validates principle  
(Observation~\ref{con:morebits}) to have as few rounds as possible.

To support Observation~\ref{con:eq_user},
we fix $|C_i|=k=16$, $\gamma=4$ and $g=2$,
we run two sets of experiment.
In the first one, we assign half of the users to each group,
and vary segment size of the first round.
It can be seen from Figure~\ref{subfig:vary_s} that when $\eta[1]=6$ 
(both rounds analyzes a domain of $2^{10}$ values), 
the overall result is optimal.
Moreover, the result is symmetric on $\eta[1]=6$
(since $\gamma=4$, the effective number of bits to examine is $12$).
This is also the same as what we expect.
In the second experiment,
we fix $\eta=6$ and try different user allocations , i.e., $n[1]$.
As shown in Figure~\ref{subfig:vary_n1},
when each of the two rounds receives half of the users, 
the overall result is maximized.
Similar to the result for segment size, the result is symmetric on $n[1]=50000$.


\subsection{Optimal Utility Scores under Different Scenarios.}

Now we verify the correctness of the optimal configuration under different scenarios.
We use the default setting of $n=1000000,m=64,\epsilon=1$, 
and plot $\fone$ and $\ncr$ values of optimal \itergroup, 
varying any of $n,m,$ and $\epsilon$.
The configuration is optimized taking \fone as the goal.

From Figure~\ref{fig:match_optimal_config}, 
we can first confirm analysis still matches empirical results well.
When the setting is not ``favorable''
($\epsilon$ is small, $n$ is small, or $m$ is large),
zipfs distribution has better results, 
while zipfs(-20) gives similar or better result on the other extreme.
The reason is that, 
when $m$ is large and $\epsilon, n$ are not sufficiently big,
the noise in \itergroup is large.
In zipfs distribution, the first several heavy hitters are more significant,
and therefore, the overall utility score is better.
While on the other hand,
it is possible to recover more heavy hitters.
The following several heavy hitters in zipfs(-20) distribution,
which are more frequent,
contribute more to the overall result.

\section{Evaluation}
\label{sec:eval}
Now we discuss experiments that evaluate different protocols.
Basically, we want to answer the following questions:
First, how many heavy hitters can be effectively identified.
Second, how much improvement is \itergroup over existing protocols.
Finally, what are the effects of different design choices in \itergroup.



\subsection{Evaluation Setup}  

Each experiment is run 10 times, and the average and standard deviation are reported.

\mypara{Datasets.}
The following three datasets are used.
We assume the zipf's distribution when optimizing \itergroup.
Note that in the real world, 
auxiliary information (heavy hitter dictionary) may exist to help improve the result.
For example, the system BLENDER~\cite{avent2017blender} is proposed to work under the assumption that a certain amount of users will participate in a centralized DP protocol to find out the dictionary of heavy hitters.
However, our focus is on the case where there is no additional dictionary or the heavy hitters are changing frequently so that existing dictionaries are not reliable to provide up-to-date information.


%

\textbf{1. Frequent URL.}
In \rappor~\cite{rappor2}, the authors synthesized
one million urls from a confidential distribution of only 100 websites.
The urls are fixed to be $20$ bytes (160 bits) long
(padding or truncating if needed).
We mimic the distribution by collecting a similar dataset from Quantcast~\cite{quantcast}.
The dataset contains domain name and monthly visited people of the $80$ thousand most frequently visited websites. 
We limit urls to $20$ bytes and limit the analysis to a $5$-minute period, 
resulting a dataset containing $1.2$ million data points, and $27$ thousand unique urls.

This also motivates a real-world application,
where the analyst can find out the most popular website.
In a previous report of RAPPOR~\cite{rappor},
the system collects homepage urls from users,
by testing on a known list of websites.
We focus on the scenario where the dictionary is unavailable or inaccurate.

\textbf{2. Query Trends.}
The AOL dataset contains user queries on AOL website during the first three months in 2006.
Similar to the settings of~\cite{avent2017blender}, 
we assume each user reports one query (w.l.o.g., the first query).
The queries are limited to be $6$ bytes long.
This results a dataset of around $0.5$ million queries including $0.2$ million unique ones.

Many real-world application such as keyword trends or hot tags can be derived from this example.
In these scenarios, the heavy hitters change frequently, 
such that the dictionary from history may not be reliable.
%

\textbf{3. Synthetic Dataset.}
We generate a synthetic dataset of $n=1000000$ data points following the exponential distribution
(also known as geometric distribution).
The values (heavy hitters) are randomly distributed.
Each value is represented by $m=64$ bits.
The exponential scale is $0.05$, which is close to the experimental setting in~\cite{rappor}.

\mypara{Competitors.}
We consider the following algorithms: \itergroup, \hash, and \spm.
In order to optimize \itergroup, we assume a zipf's distribution,
limit the number of queries to the frequency estimator to $2^{20}$,
and take \fone as the goal.

Both \spm and \hash were designed to find heavy hitters based on threshold,
but \itergroup works for top $k$ heavy hitters. 
For a fair comparison, we improve \hash and \spm in step 1 and 2,
and change them from threshold based algorithms to top $k$ based in step 3.
Note that \itergroup can also be changed to work for threshold. 
The corresponding results are shown in Section~\ref{sec:thres}.

\textbf{1. Replace LDP primitive.}
Existing methods use non-optimal LDP primitives,
but they can be changed.
Specifically, \spm use \rap~\cite{rappor} as the internal LDP primitive, 
and \hash uses \blh. 
We replace \rap and \blh with \olh to improve their efficiency.

\textbf{2. Reduce Number of Groups.}
For the url dataset, \spm specifies one segment length to be two bytes.
But for other domain length, there is no clear specification as to how long each segment should be.
Guided by Observation~\ref{con:morebits},
we make the segment length as long as possible,
under the frequency oracle query limit.

\hash uses $n^{1.5}$ channel, which is infeasible in many scenarios.
We observe that collision of other non-frequent values does not effect much, 
and propose to use $k^{1.5}$ channels.


\textbf{3. Replace Threshold Test.}
Existing methods require internal test and filtering based on a threshold.
Specifically, there is a final testing phase on all the identified values.
Only those tested above a threshold will be returned.
We replace this constraint by releasing the top $k$ values for a fair comparison.

In \rappor, moreover, 
each segment or segment-pair will be identified if its frequency is estimated above a threshold.
We relax this by limiting exactly $k$ patterns in each segment.
This ensures to identify at least $k$ heavy hitters.
For the location-pairs,
we 
keep adding segment-pairs until more than $k$ candidates are identified.

\subsection{Detailed Results}

\subsubsection{Effect of $\epsilon$}
We show \fone and \ncr results of different methods varying $\epsilon$ in Figure~\ref{fig:eval_vary_method}.
It is clear that \itergroup performs best among all three protocols.
When $\epsilon$ increases,
the number of heavy hitters that can be identified will increase.
The improvement is more significant when $\epsilon$ is larger.

When $\epsilon=4$, \itergroup achieves $\fone=0.9$,
meaning that more than ten frequent URLs can be identified;
on the other hand, \hash and \spm can only identify two.

\begin{figure*}[t]
	\subfigure[URL, \fone]{
		\label{subfig:eval_eps_url}
		\includegraphics[width=0.67\columnwidth]{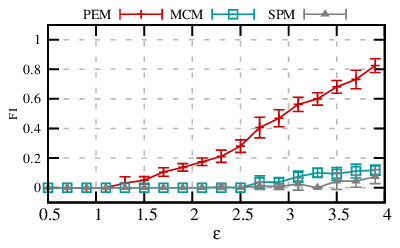}
	}
	\subfigure[AOL, \fone]{
		\label{subfig:eval_eps_password}
		\includegraphics[width=0.67\columnwidth]{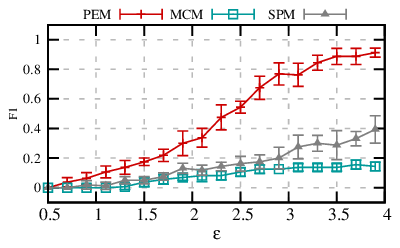}
	}
	\subfigure[Synthetic, \fone]{
        \label{subfig:eval_eps_synthesizeexponential}
        \includegraphics[width=0.67\columnwidth]{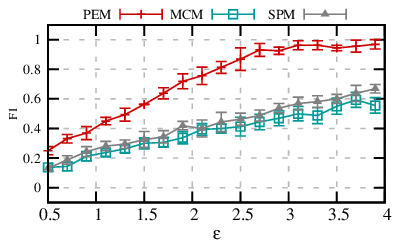}
    }
	
	\subfigure[URL, \ncr]{
		\label{subfig:eval_eps_url_ncr}
		\includegraphics[width=0.67\columnwidth]{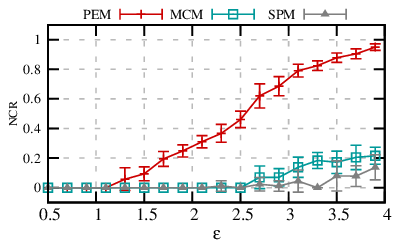}
	}
	\subfigure[AOL, \ncr]{
		\label{subfig:eval_eps_query_ncr}
		\includegraphics[width=0.67\columnwidth]{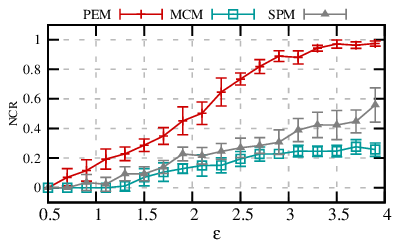}
	}
	\subfigure[Synthetic, \ncr]{
		\label{subfig:eval_eps_synthesizeexponential_ncr}
		\includegraphics[width=0.67\columnwidth]{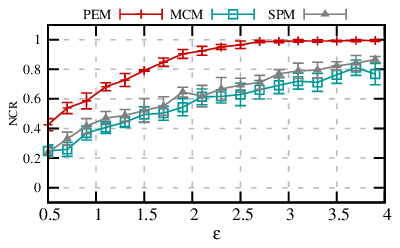}
	}
	\caption{Evaluation of the datasets, vary $\epsilon$ while fixing $k=16$. }
	\label{fig:eval_vary_method}
\end{figure*}

\subsubsection{Effect of $k$}
Figure~\ref{fig:eval_vary_k_method_1-32} gives \fone and \ncr results of different methods varying $k$.
Similarly, we can see that \itergroup outperforms \hash and \spm.
Note the correlation between \fone and \ncr is close:
a protocol with better \fone score will also have a better \ncr score.
Thus, from now on, we ignore the \ncr scores.

For most of the cases, utility scores decrease with $k$,
since the less frequent values are harder to identify.
In the synthetic dataset, \itergroup achieves almost full utility for up to $k=30$.
On the other hand, in some cases, as $k$ increases,
the absolute number of heavy hitters that can be identified stops increasing.
This is because the task becomes hard so that even with more guesses,
it is still hard to find .

\begin{figure*}[t]
	\subfigure[URL, \fone]{
		\label{subfig:eval_k_url_1-32}
		\includegraphics[width=0.67\columnwidth]{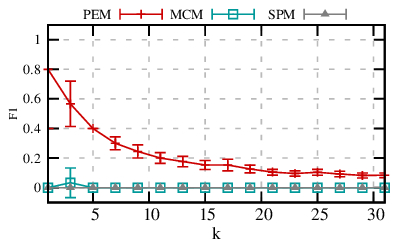}
	}
	\subfigure[AOL, \fone]{
		\label{subfig:eval_k_password_1-32}
		\includegraphics[width=0.67\columnwidth]{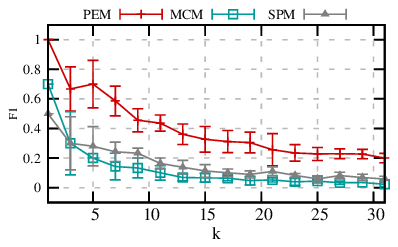}
	}
	\subfigure[Synthetic, \fone]{
		\label{subfig:eval_k_synthesize_1-32exponential}
		\includegraphics[width=0.67\columnwidth]{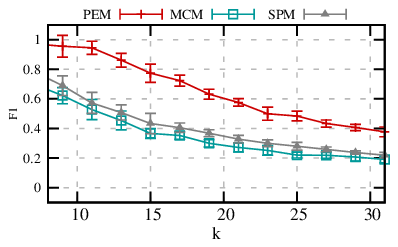}
	}

	\subfigure[URL, \ncr]{
		\label{subfig:eval_k_url_ncr_1-32}
		\includegraphics[width=0.67\columnwidth]{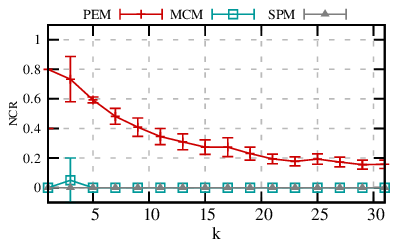}
	}
	\subfigure[AOL, \ncr]{
		\label{subfig:eval_k_query_ncr_1-32}
		\includegraphics[width=0.67\columnwidth]{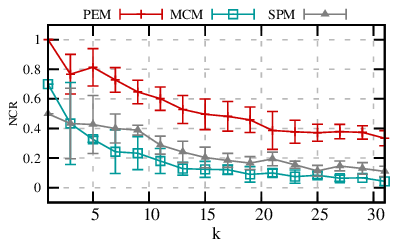}
	}
	\subfigure[Synthetic, \ncr]{
        \label{subfig:eval_k_synthesize_ncr_1-32exponential}
        \includegraphics[width=0.67\columnwidth]{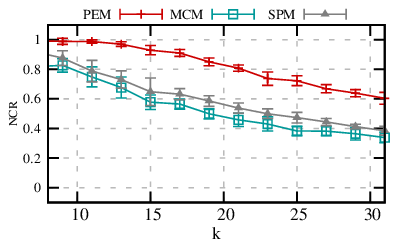}
    }
	\caption{Evaluation of the datasets, varying $k$ while fixing $\epsilon=2$. }
	\label{fig:eval_vary_k_method_1-32}
\end{figure*}



\subsubsection{Comparison of Threshold Version}
\label{sec:thres}
Both \spm and \hash use internal threshold test to find heavy hitters.
In this section, 
we modify \itergroup in order to identify heavy hitters with frequencies above a threshold $\theta$.
Note that each threshold value $\theta$ can be translated into a corresponding $k$ value.
The lower the $\theta$, the bigger the $k$ is.

Similar to the previous section, we also show results varying $\epsilon$ and $\theta$.
For brevity, we only show \fone for the Exponential dataset in Figure~\ref{fig:eval_vary_method_syn_thres}.
The results are similar in other datasets.

As can be seen from Figure~\ref{subfig:eval_eps_synthesizeexponential_thres},
when we fix $\theta=0.023$ ($0.023$ is around frequency of the $16$-the most frequent value in the dataset),
\itergroup performs better than \spm and \hash.
This advantage is most profound when $\epsilon=2$, 
where \itergroup achieves performs much better than existing methods.
The effects of fixing $\epsilon$ and varying $\theta$ are also demonstrated in Figure~\ref{subfig:eval_k_synthesize_exponential_thres} and~\ref{subfig:eval_k_synthesize_exponential_thres_e4}.

\begin{figure*}[t]
	\subfigure[\fone, vary $\epsilon$ fixing $\theta=0.023$]{
		\label{subfig:eval_eps_synthesizeexponential_thres}
		\includegraphics[width=0.67\columnwidth]{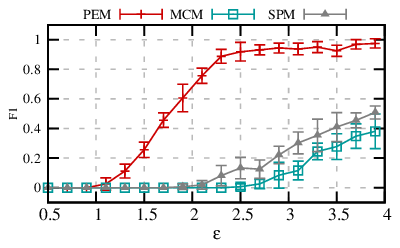}
	}
	\subfigure[\fone, vary $\theta$ fixing $\epsilon=2$]{
		\label{subfig:eval_k_synthesize_exponential_thres}
		\includegraphics[width=0.67\columnwidth]{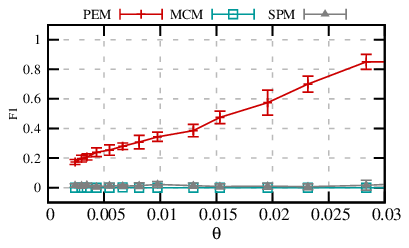}
	}
	\subfigure[\fone, vary $\theta$ fixing $\epsilon=4$]{
		\label{subfig:eval_k_synthesize_exponential_thres_e4}
		\includegraphics[width=0.67\columnwidth]{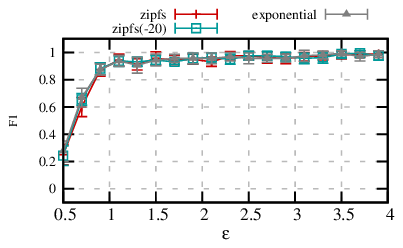}
	}
	\caption{Evaluation of the synthetic datasets, vary one of $\epsilon$ and $\theta$ while fixing the other. $m=64,n=1000000$.}
	\label{fig:eval_vary_method_syn_thres}
\end{figure*}

\subsubsection{Effect of Partitioning Users}
\label{sec:more_impro}

We further improve existing algorithms according to Proposition~\ref{thm:par_user_iden}.
Namely, instead of split privacy budget, we allocate $10\%$ of users for the final testing.
The result shown in Figure~\ref{fig:eval_improve} demonstrates the advantage of partitioning users.
Especially, when $\epsilon=1.2$,
the original \hash method achieves \fone less than $0.2$,
while the new version achieves nearly $0.8$.
For brevity, we only show \fone score on Exponential dataset, 
but the trend is similar in other settings.

\begin{figure*}[t]
    \subfigure[Effect of partitioning users.]{
        \label{fig:eval_improve}
        \includegraphics[width=0.67\columnwidth]{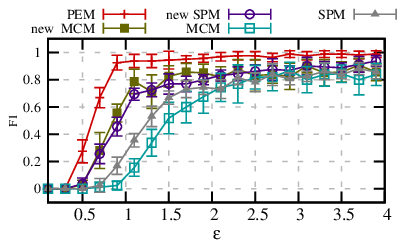}
    }
    \subfigure[Effect of $\eta$.]{
        \label{fig:eval_s}
        \includegraphics[width=0.67\columnwidth]{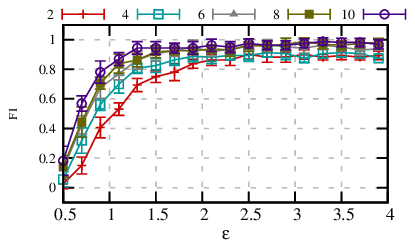}
    }
    \subfigure[Effect of distribution assumption.]{
        \label{fig:eval_dist}
        \includegraphics[width=0.67\columnwidth]{figure/eval/eval_vary_eps_synthesize_k4m64p20_dist_exponential}
    }
    \caption{Evaluation of the synthetic datasets, vary $\epsilon$. $m=64,n=1000000$. \fone is plotted.}
    \label{fig:eval_more}
\end{figure*}


\subsubsection{Effect of $\eta$}
\label{sec:more_impro_group}
In Figure~\ref{fig:eval_s}, we demonstrate the effect of segment size, i.e., $\eta$,
in \itergroup.
We fix $\eta=2, 4, 6, 8, 10$ and plot the results.
It is clear that when $\eta$ increases,
the overall utility is better.
When $\epsilon=0.9$, we see
the \fone score is $0.4$ when $\eta=2$,
and $0.8$ when $\eta=10$.
Note that there should be a limit on how large $\eta$ can be,
that is, $\eta$ is limited by the number of queries the aggregator can make.

\subsubsection{Comparison of Estimation Accuracy}
\label{sec:variance_comp}

Having demonstrated that \itergroup achieves better utility (no matter \fone or \ncr scores),
we compare the estimation accuracy.
We use the average squared error as the metric,
that is, 
\begin{align*}
\var=\frac{1}{|C_T\cap C_g|}\sum_{v\in C_T\cap C_g}\left(n_v-\tilde{n_v}\right)^2,
\end{align*}
where $n_v$ is the true count of $v$ and $\tilde{n_v}$ is its estimation by the protocol.
Note that we only account heavy hitters that are successfully identified by the protocol,
i.e., $v\in C_T\cap C_g$. 

Figure~\ref{fig:eval_var} shows comparison of estimation variance for different methods.
Observe that the \hash method has smaller variance than \rappor,
because the final testing step of \hash uses half of the $\epsilon$, 
while that of \rappor uses one third.
As a comparison, \itergroup uses only the last group,
which is one sixth of users,
and achieves similar estimation accuracy.
Note that this also complies with~\eqref{var_olh},
the estimation variance of OLH.

\begin{figure}[t]
	\centering
	\includegraphics[width=0.67\columnwidth]{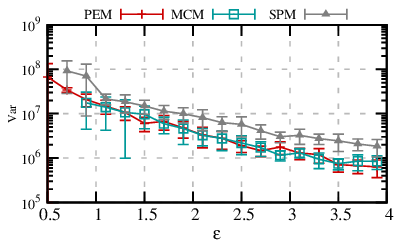}
	\caption{Evaluation of the synthetic datasets, vary $\epsilon$. $m=64,n=1000000$.}
	\label{fig:eval_var}
\end{figure}

\subsubsection{Effect of Distribution Assumption}

In the experiment, 
to mimic the blindness of the distribution,
we use a zipf's distribution to optimize \itergroup.
Note that in practice, 
it is hard to know the real distribution of the dataset.
The task of getting an accurate distribution is therefore left to the practitioners. 
Here, we argue that except in extreme cases,
the influence of a poor assumption to the final result is not much.
As we can see from Figure~\ref{fig:eval_dist}, under different assumptions,
the results are very similar.

\section{Related Work}
\label{sec:related}
Differential privacy has been the \textit{de facto} notion protecting privacy.
In the centralized settings, many DP algorithms have been proposed
(see~\cite{dpbook} for a theoretical treatment and~\cite{Li2016book} in a more practical perspective).
Recently, Uber has deployed a system enforcing DP during SQL queries~\cite{johnson2017practical},
Google also proposed several works that combine DP with machine learning~\cite{abadi2016deep, papernot2016semi}.
In the local setting, we have also seen real world deployment:
Google deployed \rap~\cite{rappor} as an extension within Chrome, and Apple~\cite{Apple} also uses similar methods to help with predictions of spelling and other things.

Of all the problems,
the basic tools in LDP are mechanisms to estimate frequencies of values.
Wang et al. compare different mechanisms using estimation variance~\cite{ldpprimitive}.
They conclude that when the domain size is small, the Generalized Random Response provides best utility, and Optimal Local Hash (\olh)/Optimal Unary Encoding (\oue)~\cite{ldpprimitive} when the domain is large.
There also exists other mechanisms with higher variance:
RAPPOR by Erlingsson et al.~\cite{rappor} and Random Matrix Projection (\blh) by Bassily and Smith~\cite{Bassily2015local}.
These protocols use ideas from earlier work~\cite{mishra2006privacy,Duchi:2013:LPS}.
Kairouz et al.~\cite{kairouz2014extremal} prove the optimal mechanisms are extreme.

The heavy hitter problem is to identify frequent values when the domain of possible values is very large, so that it is infeasible to obtain estimations for all values to identify which ones are frequent.
The problem has been studied in the centralized DP setting~\cite{chan2012differentially,mir2011pan}.
The basic idea is to use sketches to store the heavy hitters and their counts, and then publish the results with some noise.
However, in the local setting, it is not even possible to get the candidates.
One existing solution is \spm~\cite{rappor2}.
Hsu et al.~\cite{hsu2012distributed} and
Mishra et al.~\cite{mishra2006privacy} also provide efficient protocols for heavy hitters, but the error bound is proved higher than \hash, proposed by Bassily and Smith~\cite{Bassily2015local}.
In this paper, we compare with \hash~\cite{Bassily2015local} and \spm~\cite{rappor2}.

After we finish this work, we also found a simultaneous paper~\cite{bassily2017practical} by Bassily et al.
This paper proposes two methods to handle the heavy hitter problem.
The first method is similar to our \itergroup protocol except that each group of users report on one incremental bit. 
This divides users into $m$ groups.  Since this line of work is motivated by the applications where $m$ is large, 
dividing the population into $m$ groups will result in poor accuracy.  This violates one key observation we made 
in this paper: the key to improve accuracy is to reduce the number of groups. (Principle 2)

The other method is basically the \hash method with only $\sqrt{n}$ channels (instead of $n^{1.5}$ channels, as suggested in~\cite{Bassily2015local}).
In our experimental comparison with \hash, we already use around  $\sqrt{n}$ channels for \hash, and it significantly under-perform our proposed method. 
The two methods are proven to provide similar utility guarantees with similar complexities.
No experimental comparison with \spm is conducted in~\cite{bassily2017practical}.

Besides the heavy hitter problem,
there are other problems in the LDP setting that rely on mechanisms for frequency estimation.
One interesting problem is estimating frequencies of itemsets~\cite{EGS03,ESA+02}.
{Nguy{\^e}n et al.~\cite{samsung} studied how to report numerical answers.
Chen et al.~\cite{samsung_location} uses \blh to learn location from users.
Wang et al.~\cite{wang2016private} uses random response and RAPPOR together for learning weighted histogram.
Qin et al.~\cite{ccs16} estimate frequent items using RAPPOR and \blh, where each user has a set of items.
Solutions to these problems can be improved by insights gained in our paper.

Avent et al.~\cite{avent2017blender} propose a system that combines the centralized and local version of DP together and finds heavy hitters.
This work is different from ours: The dictionary of heavy hitters is constructed by a group of users who participate in the centralized version of DP.
LDP is used only to provide additional information afterwards.


\section{Conclusions}
\label{sec:conc}
In this paper, we propose LDP protocols that finds out heavy hitters in a large domain.
The utility of the protocols are thoroughly analyzed and optimized.
During analysis, we identify several design principles that can potentially serve as guidelines when solving other LDP problems.
Finally, we verify the correctness of analysis and strength of the new methods using empirical experiment on both synthetic and real-world datasets.

Current solutions rely heavily on the distribution. 
If the distribution is unfavorable, the result will be poor.
One interesting problem to explore would be to find protocols that works well for any distribution.
Another current limitation is that all the protocol proposed require the domain to be fixed. It would be interesting to find heavy hitters in an unbounded domain.
It is also an interesting direction to explore the possibility of using domain-specific knowledge to improve the protocol.

{
	\bibliographystyle{IEEEtranS}

	\bibliography{IEEEabrv,Ninghui,privacy,password}
}
\appendix

\mypara{Proof of Proposition~\ref{thm:par_user_iden}.}
\begin{proof}
	Suppose we examine the value $v_j$ in round $i$.
	We first expand $\Pall{i}{j}$ in \eqref{eqn:iden_prob} with
	$p_j = p\cdot f_j+q\cdot(1-f_j)$,
	$\sigma^2_j[i]=n[i]\cdot p_j\cdot (1-p_j), \mu_j[i]=n[i]\cdot p_j,$
	$\sigma_0^2[i]=n[i]\cdot q\cdot (1-q)$, $\mu_0[i]=n[i]\cdot q$,
	and (according to \olh) $p=\frac{1}{2}$, $q=\frac{1}{e^\epsilon+1}$ :
	\begin{align}
	&\Phi\left(\frac{\mu_j[i]+\Phi^{-1}\left(\frac{k-j}{N[i]}\right)\cdot \sigma_0[i]-\mu_0[i]}{\sigma_j[i]}\right)\nonumber\\
	=&\Phi\left(\frac{n[i]f_j(p-q)+\Phi^{-1}\left(\frac{k-j}{N[i]}\right)\cdot \sqrt{n[i]q(1-q)}}{\sqrt{n[i](q+f_j(p-q))(1-q-f_j(p-q))}}\right)\nonumber\\
	=&\Phi\left(\frac{\sqrt{n[i]}f_j(p-q)/\sqrt{q(1-q)}+\Phi^{-1}\left(\frac{k-j}{N[i]}\right)}{\sqrt{(q+f_j(p-q))(1-q-f_j(p-q))}/\sqrt{q(1-q)}}\right)\nonumber\\
	&(\mbox{using }p-q=\frac{e^\epsilon-1}{2(e^\epsilon+1)}, q(1-q)=\frac{e^\epsilon}{(1+e^\epsilon)^2})\nonumber\\
	=&\Phi\left(\frac{\sqrt{n[i]}\frac{f_j(e^\epsilon-1)}{2e^{\epsilon/2}}+\Phi^{-1}\left(\frac{k-j}{N[i]}\right)}{\sqrt{1+\left(\frac{e^\epsilon-1}{e^{\epsilon/2}}\right)^2\frac{f_j}{2}\left(1-\frac{f_j}{2}\right)}}\right)\label{eqn:full_iden}
	\end{align}
    
	Since $\Phi(\cdot)$ is monotone, when comparing~\eqref{eqn:full_iden} under different settings,
    it suffices to compare the value inside $\Phi(\cdot)$.
	Simplify this equation short notations:
	$A=\Phi^{-1}\left(\frac{k-j}{N[i]}\right)<0, B=\frac{f_j}{2}\left(1-\frac{f_j}{2}\right)
	,C=\frac{f_j\sqrt{n[i]}}{2}$, and
	$E(\epsilon)=\left(\frac{e^{\epsilon}-1}{{e^{\epsilon/2}}}\right)^2$.
	When partitioning users, 
	we have $P_1=\frac{A}{\sqrt{1+B E(\epsilon)}}+\frac{C}{\sqrt{g/E(\epsilon)+gB }};$
	when split privacy budget, we have $P_2=\frac{A}{\sqrt{1+B E(\epsilon/g)}}+\frac{C}{\sqrt{1/E(\epsilon/g)+B }}$ 
	The goal is to show $P_1>P_2$.

	It is easy to see that $E(\epsilon/g)<E(\epsilon)$, 
	and thus the subtracted term of $P_1$ is greater than $P_2$.
	As to the first terms, note that in practice, we care more about small $f_j$, 
	because values with high frequency $f_j$ can always be identified.
	Therefore, assuming $B\sim 0$, $P_1>P_2$ if $E(\epsilon/g)<E(\epsilon)/g$ (this is proven by induction on $g$ in Lemma~\ref{lm:er}).
	Numerical calculation also validate that $P_1>P_2$ in most cases.

\end{proof}

\begin{lemma}
	\label{lm:er}
	$E(\epsilon/g)<E(\epsilon)/g$.
\end{lemma}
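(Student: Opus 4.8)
The plan is to recognize that $E$ has a clean hyperbolic form, which collapses the inequality to a one-line statement that yields to induction on $g$. The key observation is that
\[
\frac{e^{\epsilon}-1}{e^{\epsilon/2}} = e^{\epsilon/2}-e^{-\epsilon/2} = 2\sinh(\epsilon/2),
\]
so that $E(\epsilon)=4\sinh^2(\epsilon/2)$. Substituting this into both sides of the claim $E(\epsilon/g)<E(\epsilon)/g$ and cancelling the common factor $4$, the lemma becomes $g\,\sinh^2(\epsilon/(2g)) < \sinh^2(\epsilon/2)$. Setting $u=\epsilon/(2g)>0$, so that $\epsilon/2=gu$, and taking positive square roots, it suffices to prove
\[
\sqrt{g}\,\sinh(u) < \sinh(gu) \quad\text{for every integer } g\ge 2 \text{ and every } u>0.
\]

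Next I would establish this last inequality by induction on $g$. For the base case $g=2$, the double-angle identity $\sinh(2u)=2\sinh(u)\cosh(u)$ turns the goal into $\sqrt{2}<2\cosh(u)$, which holds because $\cosh(u)\ge 1$. For the inductive step I would invoke the addition formula
\[
\sinh((g+1)u)=\sinh(gu)\cosh(u)+\cosh(gu)\sinh(u),
\]
apply the induction hypothesis $\sinh(gu)>\sqrt{g}\,\sinh(u)$ to the first term, and use $\cosh(u)\ge 1$ and $\cosh(gu)\ge 1$ on the surviving factors. This reduces the step to the elementary numerical inequality $\sqrt{g}+1\ge\sqrt{g+1}$, which follows at once by squaring since $2\sqrt{g}\ge 0$.

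The only real subtlety, and what I would flag as the main obstacle, is the initial algebraic recognition that $E$ is, up to the constant $4$, a squared hyperbolic sine. Without that simplification the raw exponential expressions are awkward to compare across the two arguments $\epsilon$ and $\epsilon/g$, whereas once it is in place the entire induction is driven by nothing more than $\cosh\ge 1$ and the $\sinh$ addition law. Everything after the rewriting is routine.
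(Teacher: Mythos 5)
Your proof is correct, and it takes a genuinely different route from the paper's. You reduce the lemma to $\sqrt{g}\,\sinh(u)<\sinh(gu)$ via the identity $\frac{e^{\epsilon}-1}{e^{\epsilon/2}}=2\sinh(\epsilon/2)$ and then run a direct induction on $g$ powered by the $\sinh$ addition formula, $\cosh\ge 1$, and the numerical fact $\sqrt{g}+1\ge\sqrt{g+1}$. The paper instead uses a calculus argument: it observes equality of the two sides at $\epsilon=0$ and then compares derivatives of $\sqrt{E(\epsilon)/g}$ and $\sqrt{E(\epsilon/g)}$ in $\epsilon$ (writing $z=\sqrt{e^{\epsilon/g}}$, the comparison collapses to $z^{g-1}+z^{-(g+1)}>1+z^{-2}$ for $z>1$), with only that final exponential inequality justified by induction; the write-up there is also somewhat sloppy (a stray $r$ where $g$ is meant, dropped constant factors). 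Your argument buys self-containedness and transparency --- no differentiation, every step is an elementary identity or a one-line inequality, and strictness is tracked cleanly through the induction. What the paper's derivative comparison buys is that it does not actually need $g$ to be an integer: the monotonicity argument works for any real $g>1$, whereas your induction as stated covers only integer $g$ (which is all the application requires, since $g$ is a number of groups). Both proofs hinge on the same initial rewriting of $E$ as a hyperbolic sine squared, so the essential algebraic insight is shared; the divergence is entirely in how the resulting inequality is established.
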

\begin{proof}
	When $\epsilon=0$, 		
	$E(\epsilon/g)=E(\epsilon)/g$.
	When $\epsilon>0$,
	we only needs to show the derivative of the right hand is greater:
	define $z=\sqrt{e^{\epsilon/g}}$.
	\begin{align*}
	&\left(\sqrt{E(\epsilon)/g}\right)'
	=\sqrt{\frac{1}{g}}\left(z^g-\frac{1}{z^g}\right)'\\
	>&\frac{1}{g}\left(gz^{g-1}+rz^{-g-1}\right)=z^{g-1}+\frac{1}{z^{g+1}}\\
	>&1+\frac{1}{z^2}=\left(\sqrt{E(\epsilon/g)}\right)'
	\end{align*}
	The last step is proven by induction.
	
\end{proof}

\mypara{Simpler Joint Estimation for~\cite{rappor2}}

In \spm,
in order to query a value pair from $ C_i\times C_j$ where $1\leq i\neq j\leq g$,
maximal likelihood estimation (MLE) is used.

MLE updates every possible combination via a loop, which terminates only if the accumulated update amount is small.
This method is slow in practice.
Here we introduce a simpler method to recover the joint distribution:

Suppose in the group of $n$ users who report on segments $i$ and $j$, $n_a$ users have pattern $a$ in segment $i$,
$n^b$ users have pattern $b$ in segment $j$,
and $n_a^b$ users have patterns $a$ and $b$ simultaneously in segments $i$ and $j$, respectively.
The expected ``support'' this pattern pair $a,b$ receive is
\begin{align*}
E[I^b_a] = n^b_a \cdot p^2 + (n^b - n^b_a) \cdot pq + (n_a - n^b_a) \cdot pq \\
+ (n- n^b - n_a + n^b_a) \cdot q^2,
\end{align*}
where $p=\frac{1}{2}$ and $q=\frac{1}{e^\epsilon+1}$ are the parameters used in OLH.
 
Note that we already have partial estimations 
$\tilde{n_a}$ and $\tilde{n^b}$ on segment $i$ and $j$, respectively,
therefore, we can have the estimated value of $n^b_a$:
\begin{align*}
\tilde{n^b_a} &= \frac{I^b_a - (\tilde{n_a} + \tilde{n^b}) \cdot q(p-q) - n \cdot q^2}{(p - q)^2}\\
\end{align*}
Given that $\tilde{n_a}$ and $\tilde{n^b}$ are unbiased,
$\tilde{n^b_a}$ is unbiased.
In the actual implementation, we use this method.

This method can be extended to multiple (more than two) answers:
Let $V$ be a set of answers on different questions 
(e.g., $V=\{a,b\}$ in the two pattern example),

\[
\tilde{n_V} = 
\frac{I_V - \sum_{i=0}^{n-1} \left[\tilde{n^*_{V(i)}}q^{n-i}(p-q)^i\right]}{(p-q)^n}
\]
where $\tilde{n^*_{V(i)}}$ to denote the summation of joint estimation of strings specified by all size $i$ subsets of $V$, and $\tilde{n^*_\emptyset}=n$. 
\end{document}